\newtheorem{remark}{Remark}
\newtheorem{assumption}{Assumption}
\newtheorem{corollary}{Corollary}
\newtheorem{proposition}{Proposition}
\newtheorem{proof}{\textbf{Proof}}
\def\begcen{\begin{center}}
\def\endcen{\end{center}}
\newcommand{\col}{\mbox{col}}
\def\diag{\mbox{diag}}
\def\calc{\mathcal{C}}
\def\calq{\mathcal{Q}}
\def\cala{\mathcal{A}}
\def\calm{\mathcal{M}}
\def\calr{\mathcal{R}}
\def\calj{\mathcal{J}}
\def\calz{\mathcal{Z}}
\def\calo{\mathcal{O}}
\def\cale{\mathcal{E}}
\def\cl{{\tt cl}}
\def\frakh{\mathcal{H}}
\def\frakj{\mathcal{J}}
\def\frakr{\mathcal{R}}
\def\bfh{{\bf H }}
\def\bfj{{\bf J}}
\def\bfr{{\bf R}}
\def\bbj{\mathbb{J}}
\def\liminf{\lim_{t \to \infty}}
\def\L2e{{\cal L}_{2e}}
\def\rea{\mathbb{R}}
\def\diag{\mbox{diag}}
\def\col{\mbox{col}}
\def\et{\varepsilon_t}
\def\diag{\mbox{diag}}
\def\min{{\mbox{min}}}
\def\max{{\mbox{max}}}
\def\et{\epsilon_t}
\def\begequarr{\begin{eqnarray}}
\def\endequarr{\end{eqnarray}}
\def\begequarrs{\begin{eqnarray*}}
\def\endequarrs{\end{eqnarray*}}
\def\begarr{\begin{array}}
\def\endarr{\end{array}}
\def\begequ{\begin{equation}}
\def\endequ{\end{equation}}
\def\lab{\label}
\def\begdes{\begin{description}}
\def\enddes{\end{description}}
\def\begenu{\begin{enumerate}}
\def\begite{\begin{itemize}}
\def\endite{\end{itemize}}
\def\endenu{\end{enumerate}}
\def\lef[{\left[\begin{array}}
\def\rig]{\end{array}\right]}
\def\qed{\hfill$\Box \Box \Box$}
\def\begcen{\begin{center}}
\def\endcen{\end{center}}
\def\begrem{\begin{remark}\rm}
\def\endrem{\end{remark}}
\def\IJRNLC{{\it Int. J. on Robust and Nonlinear Control}}
\def\TAC{{\it IEEE Trans. Automatic Control}}
\def\EJC{{\it European Journal of Control}}
\def\CDC{{\it IEEE Conference on Decision and Control}}
\def\SCL{{\it Systems \& Control Letters}}
\def\AUT{{\it Automatica}}
\def\CSM{{\it IEEE Control Systems Magazine}}
\def\begmat#1{\begin{bmatrix}#1\end{bmatrix}}
\def\begali#1{\begin{align}{#1}\end{align}}
\def\begalis#1{\begin{align*}{#1}\end{align*}}
\newcommand{\blue}[1]{{\color{blue} #1}}
\begin{document}

\begin{frontmatter}

\title{Orbital Stabilization of Nonlinear Systems via Mexican Sombrero Energy Shaping and Pumping-and-Damping Injection\thanksref{footnoteinfo}} 

\thanks[footnoteinfo]{Corresponding author: W. Zhang.}
\author[SJTU,L2S]{Bowen Yi}\ead{yibowen@ymail.com},               
\author[L2S,ITMO]{Romeo Ortega}\ead{ortega@lss.supelec.fr},  
\author[L2S]{Dongjun Wu}\ead{dongjun.wu@l2s.centralesupelec.fr},
\author[SJTU]{Weidong Zhang}\ead{wdzhang@sjtu.edu.cn}

\address[SJTU]{Department of Automation, Shanghai Jiao Tong University, 200240 Shanghai, China}             
\address[L2S]{ Laboratoire des Signaux et Syst\`emes, CNRS-CentraleSup\'elec, 91192 Gif-sur-Yvette, France}
\address[ITMO]{Faculty of Control Systems and Robotics, ITMO University, St. Petersburg 197101, Russia}

\begin{keyword}                           
passivity-based control, nonlinear systems, orbital stabilisation, oscillator.         
\end{keyword}                             

\begin{abstract}                          
In this paper we show that a slight modification to the widely popular interconnection and damping assignment passivity-based control method---originally proposed for stabilization of equilibria of nonlinear systems---allows us to provide a solution to the more challenging orbital stabilization problem. Two different, though related, ways how this procedure can be applied are proposed. First, the assignment of an energy function that has a minimum in a closed curve, {\em i.e.}, with the shape of a Mexican sombrero. Second, the use of a damping matrix that changes ``sign" according to the position of the state trajectory relative to the desired orbit, that is, pumping or dissipating energy. The proposed methodologies are illustrated with the example of the induction motor and prove that it yields the industry standard field oriented control.
\end{abstract}
\end{frontmatter}

%
\section{Introduction}
\label{sec1}
%
In many practical tasks the system under control is required to operate along periodic motions, \emph{i.e.}, walking and running robots, path following, rotating electromechanical systems, AC or resonant power converters, and oscillation mechanisms in biology. As clearly explained in \cite[Section 8.4]{KHA} the stability analysis of these behaviors can be recast as a standard equilibrium stabilization problem, but this leads to very conservative results. It is more convenient, instead, to invoke the notion of stability of an invariant set, where the latter is the closed orbit associated to the periodic solution. This approach leads to the important notion of orbital stability  \cite[Definition 8.2]{KHA}.

A large number of papers and books have been devoted to analysis of orbital stability of a given dynamical system, see {\em e.g.}, \cite{CHE,FRAPOG,GUKHOL}. However, there are only a few \emph{constructive tools} available to solve the task of orbital stabilization of a controlled system. A popular approach to address this question is the virtual holonomic constraints (VHC) method, which has been tailored for mechanical systems of co-dimension one \cite{MAGCONtac,MOHetalaut,SHIetaltac,SHIetaltac10}. In the VHC method a certain subspace of the state-space is rendered attractive and invariant, leading to a projected dynamics that behaves as oscillators. This is a particular case of the framework adopted in the immersion and invariance (I\&I) technique, first reported for equilibrium stabilization in \cite{ASTORT}, and later extended for observer design and adaptive control in \cite{ASTbook}. In \cite{ORTijrnlc} it has recently been shown that I\&I can also be adapted for orbital stabilization, leading to a procedure that contains,  as particular case, the VHC designs. The only modification done to the standard I\&I technique is in the definition of the target dynamics that now should be chosen possessing periodic orbits, instead of an equilibrium at the desired point. A main drawback in both the VHC and I\&I methods is that the steady-state behavior cannot be fixed {\em a priori}, but depends on the initial states, see  \cite[Remark 2]{ORTijrnlc} for a discussion on this matter.

An alternative approach to generate oscillations is reported in \cite{STASEPtac}, where it is proposed to construct passive oscillators for Lure dynamical systems using ``sign-indefinite" feedback static mappings, which is a mechanism similar to the pumping-and-damping injection discussed below. Unfortunately, since the analysis is carried out applying the center manifold theory---that is a local notion---the obtained oscillators are assumed to have small amplitudes. Orbital stabilization designs, for some particular controlled plants, have also been reported in \cite{ARCetaltac,FRAPOG,SPO}.

The aim of this paper is to show that the widely popular interconnection and damping assignment passivity based control (IDA-PBC), originally proposed in \cite{ORTetaltac,ORTetalaut,ORTGAR} for stabilization of equilibria, can be easily be adapted to address the problem of orbital stabilization of general nonlinear systems.  This leads to two new \emph{constructive} solutions for this problem that---as usual in PBC---have a clear interpretations from the energy viewpoint. First, the assignment of an energy function that has a minimum in a closed curve, {\em i.e.}, with the shape of a Mexican sombrero. Second, the use of a damping matrix that changes ``sign" according to the position of the state trajectory relative to the desired orbit, that is, pumping or dissipating energy. As usual in all constructive nonlinear controller designs, the success of the proposed methods hinges upon our ability to solve a partial differential equation (PDE).

The remaining of the paper is organized as follows. Section \ref{sec2} revisits the standard IDA-PBC. Section \ref{sec3} introduces the problem formulation of orbital stabilization, followed by the constructive main results in Section \ref{sec4}. The application to the induction motor (IM) is reported in Section \ref{sec5}. Interestingly, we prove that the resulting controller exactly coincides with the industry standard direct field-oriented control (FOC) first proposed in \cite{BLA}. In Section \ref{sec6} the orbital stabilization of pendula is studied. The paper is wrapped-up with conclusions and future work in Section \ref{sec7}.
\\

\noindent {\em Notation.} $\mathbb{S}$ denotes the unit circle. Given a set $\cala\subset \rea^n$ and a vector $x \in \rea^n$, we denote $\|x\|_\cala:= \inf_{y\in\cala}|x-y|$, with $|x|^2:=x^\top  x$, and $B_{\varepsilon}(\cala) :=\{x\in \rea^n | \|x\|_\cala \le \varepsilon\}$. All mappings are assumed smooth. For a full-rank mapping $g(x) \in \rea^{n\times m}$ with $(m < n)$, we denote the generalized inverse as $g^\dagger(x) := [g^\top(x)g(x)]^{-1} g^\top(x)$, and $g^\bot(x) \in \rea^{(n-m) \times n}$ a full-rank left annihilator of $g(x)$.  We define the gradient transpose operator as $\nabla_x:=(\partial /\partial x)^\top$. When clear from the context the arguments of the mappings and the operator $\nabla$ are omitted.

\noindent {\em Caveat.} An abridged version of this paper will be presented in \cite{YIetalcdc}.

%
\section{Background on IDA-PBC}
\label{sec2}
%
We consider in the paper systems written in the form
\begequ
\label{fgsys}
\dot{x} = f(x) + g(x)u,
\endequ
with the state $x \in \rea^n$ and the control $u \in \rea^m,\;m \leq n$ and $g(x)$ full rank. To solve the orbital stabilization problem we propose in the paper a variation of the IDA-PBC method \cite{ORTetalaut}, normally used for regulation tasks. The objective in IDA-PBC is to find a feedback control law $u=\hat u(x)$ such that the closed-loop dynamics takes a {port-Hamitonian (pH)} form, that is,
\begequ
\label{matobj}
f(x)+g(x)\hat u(x) = \big[\calj (x) - \calr(x)\big] \nabla H(x)=:f_{\tt cl}(x)
\endequ
with $H(x) \in \rea$ the desired Hamiltonian and
\begequ
\lab{jdrd}
\calj(x) = - \calj^\top(x),\; \calr(x) = \calr^\top(x)
\endequ
the desired $(n \times n)$ interconnection and damping matrices, respectively. The matching objective \eqref{matobj} is achieved if and only if the following PDE (in $H(x)$) is solved
\begali{
\lab{pde}
  g^\bot(x)f(x)  &=  g^\bot(x) \big[\calj(x) - \calr(x)\big]\nabla H(x).
}
If this is the case, the control law is given as
\begali{
  \hat u(x) & = g^\dagger(x) \big[ \big(\calj(x) - \calr(x)\big) \nabla H(x) - f(x) \big].
\label{ida}
}

In {\em regulation} tasks, $H(x)$ has a unique minimum at the desired equilibrium and we choose the matrix $\calr(x)$ to be positive semi-definite to inject the damping required to drive the trajectory towards the equilibrium. In this paper we show that, for {\em orbital stabilization} we select $H(x)$ to have a minimum at the {\em desired orbit}---see Fig. \ref{fig:somb}. We will refer to this controller as Mexican sombrero energy assignment (MSEA) PBC.

\begin{figure}[]
  \centering
  \includegraphics[width=2.5cm]{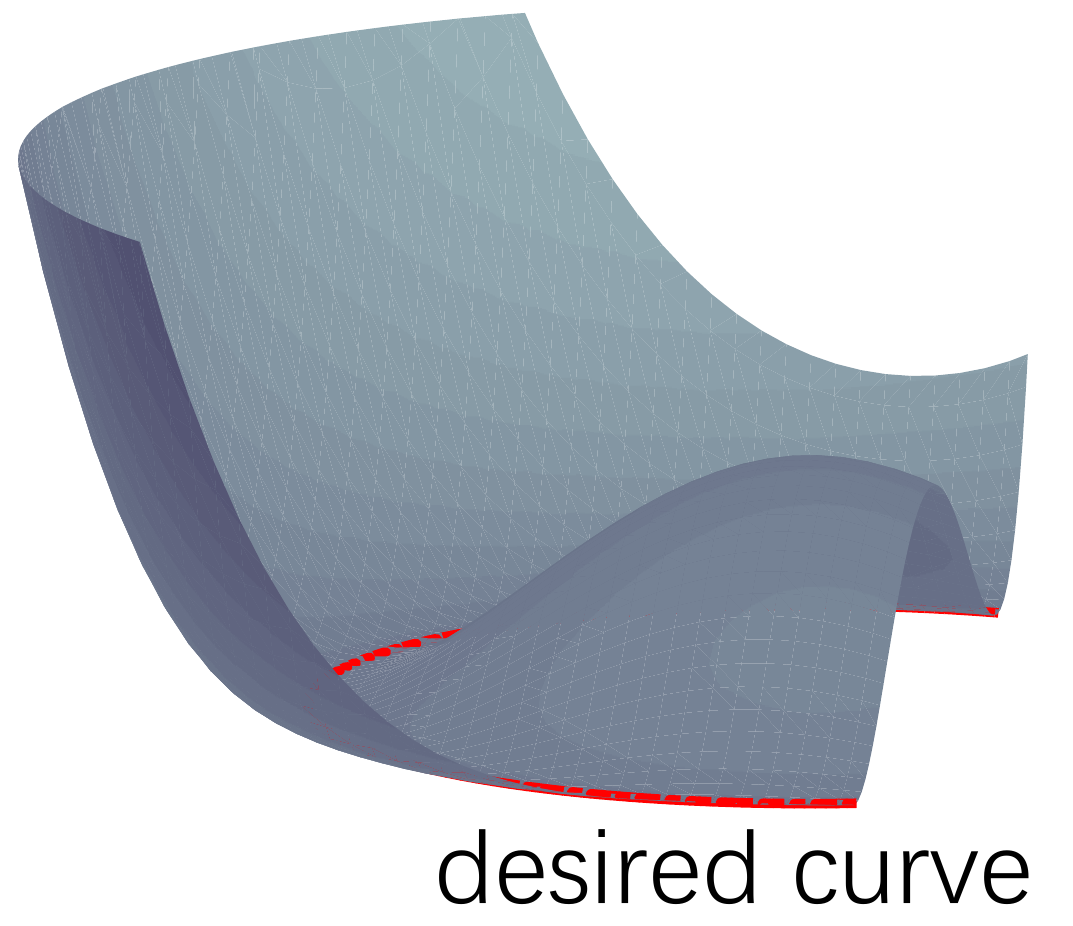}
  \quad~~
  \includegraphics[width=3.4cm]{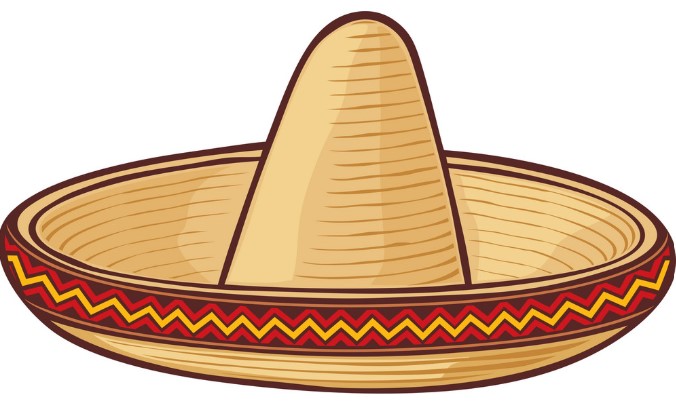}
  \caption{The closed-loop Hamiltonian with the target orbit and the Mexican sombrero in MSEA-PBC}\label{fig:somb}
\end{figure}

An alternative option is to select the ``sign" of $\calr(x)$ to {\em pump energy} or {\em inject damping} according to the relative position of the state with respect to the desired orbit---this method is called energy pumping-and-damping (EPD)-PBC. A visual illustration is given in Fig. \ref{fig:pd}.

\begin{figure}[]
  \centering
  \includegraphics[width=2.5cm]{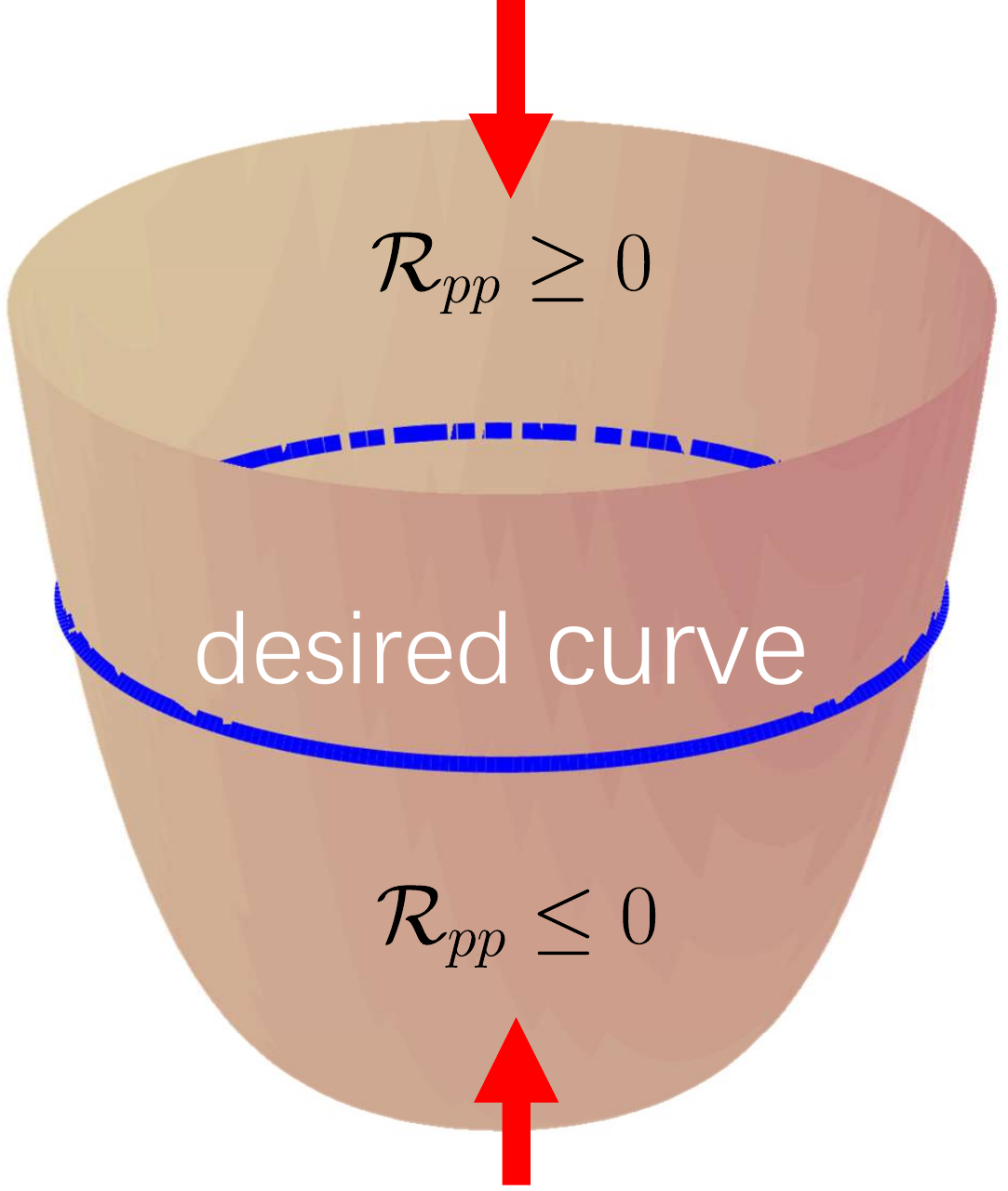}
  \caption{The desired Hamiltonian and the target orbit in EPD-PBC.}\label{fig:pd}
\end{figure}

%
\section{Problem Formulation}
\label{sec3}
%
We are interested in the paper in the generation, via IDA-PBC, of periodic solutions $X:\rea_+ \to \rea^n$ which are {\em asymptotically orbitally stable}. That is,
\begalis{
\dot X(t)  =f_{\tt cl}(X(t)) , \quad
X(t)  =X(t+T),\; \forall t \geq 0,
}
where the closed-loop vector field $f_{\tt cl}(x)$ is given in \eqref{matobj}, and the set is defined by its associated closed orbit
$$
\cala:=\{x \in \rea^n\;|\; x=X(t),\;0 \leq t \leq T\},
$$
is attractive.

We consider a particular case of periodic motion, which is defined as follows. First, split the state as $x:= \col(x_p, x_\ell)$, with $x_p \in \rea^{2},\;x_\ell \in \rea^{n-2}$. We also partition the matrices $\calj(x)$ and $\calr(x)$ conformally as
$$
\begmat{ {(\cdot)_{pp} }& (\cdot)_{p\ell} \\ (\cdot)_{\ell p} & {(\cdot)_{\ell \ell}} } \sim \begmat{ \rea^{2 \times 2} & \rea^{2 \times (n-2)} \\ \rea^{(n-2) \times 2} & \rea^{(n-2) \times (n-2)}}.
$$
Then, define the set $\cala$ as
$$
\cala = \calc \times \{x_{\ell}^*  \} \subset \rea^n,
$$
where $x_{\ell}^*   \in \rea^{n-2}$ is a constant vector and $\calc$ is a Jordan curve, given in implicit form as
\begequ
\label{calc}
\calc:= \{x_p \in \rea^2 ~|~ \Phi(x_p) =0 \},
\endequ
{on which $\nabla \Phi\neq 0$ with a smooth function $\Phi(x_p) \in \rea$.}

\begrem
\lab{rem1}
It is important at this point to clarify the difference between our objective of {\em orbital} stabilization and the more classical {\em set} stabilization. The latter is satisfied ensuring $\liminf \|x(t)\|_\cala=0$, but this does not ensure that the desired periodic motion is generated. Indeed, if the set
\begequ
\lab{ome}
\calo:=\{x\in \rea^n\;|\;\|x\|_\cala=0\}
\endequ
contains {\em equilibrium points} of the closed-loop dynamics the periodic motion is not generated. That is, we want to ensure that the closed-loop vector field \eqref{matobj} satisfies
\begequ
\lab{fneq0}
f_{\tt cl}(x)|_{x \in \calo}\neq 0.
\endequ
\endrem
%

\section{Main Results}
\label{sec4}
%
We propose in this section two methods to solve the orbital stabilization problem posed above, MSEA and EPD-PBC, whose underlying philosophy is described in Section \ref{sec2}. Connections between these two methods are also given.
\subsection{Mexican sombrero energy assignment PBC}
\label{subsec51}
The successful application of the IDA-PBC procedure described in Section \ref{sec2} is guaranteed  in MSEA with the following.

\begin{assumption}
\lab{ass1}\rm
 There are mappings \eqref{jdrd} with $\calr(x)\ge0$ and a function $H_0:\rea^{n-1} \to \rea$ verifying
 \begequ
 \label{A2}
    \arg\min\; H_0({x_0, x_\ell}) =(0, x_\ell^*  ) \quad \text{(isolated)},
 \endequ
which are solutions of the PDE \eqref{pde}, where we defined the function
$
H (x) := H_0(\Phi(x_p), x_\ell).
$
\end{assumption}

There are two additional requirements to ensure the success of the MSEA design. First, a detectability-like condition to guarantee attractivity of the desired orbit. Second, to avoid the scenario discussed in Remark \ref{rem1}, we impose a constraint on the interconnection matrix, that ensures there are no equilibrium points in the set $\calo$ given in \eqref{ome}. These requirements are articulated in the assumptions of the following proposition.
\begin{proposition}
\label{pro1}\rm
Consider the system \eqref{fgsys}, verifying Assumption \ref{ass1}, in closed-loop with the control law $u=\hat u(x)$ with $\hat u(x)$ given in \eqref{ida}. Assume the following.
\begin{enumerate}
  \item[\em H1] $\cala$ is the largest invariant set in the set
 $$
\calq:= \{x\in \rea^n | \nabla^\top H(x) \calr (x) \nabla H(x) =0 \} \cap B_\varepsilon (\cala) ,
 $$
 for some $\varepsilon>0$.
  \item[\em H2] The (1,2)-element of $\calj(x)$ may be parameterized as
\begequ
\label{param1}
 \calj_{(1,2)}(x) =  { c(x) \over \nabla_{x_0}  H_0(x_0, x_\ell)}\Bigg|_{x_0 = \Phi(x_p)}
\endequ
for some $c: \rea^n \to \rea$ satisfying
$
0 < |c(x) | < \infty, \; \forall x \in \cala.
$
\end{enumerate}

Then, the closed-loop system is asymptotically orbitally stable.
\end{proposition}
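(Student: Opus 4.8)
The plan is to establish asymptotic orbital stability in two stages: first prove that the desired orbit $\cala$ is attractive using a Lyapunov/LaSalle argument built on the assigned Hamiltonian $H$, and then verify that the limiting behavior on $\cala$ is genuinely the periodic motion rather than convergence to an equilibrium, invoking condition H2 to rule out the degenerate scenario of Remark \ref{rem1}.

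\medskip

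\noindent \textbf{Step 1 (Lyapunov function and dissipation).} I would take $H(x)=H_0(\Phi(x_p),x_\ell)$ as a candidate Lyapunov function relative to $\cala$. By Assumption \ref{ass1}, $H_0$ has an isolated minimum at $(0,x_\ell^*)$, so on the level of the composite function $H$ attains its minimum precisely on $\cala=\calc\times\{x_\ell^*\}$, since $\Phi(x_p)=0$ characterizes $\calc$. Thus $H$ is (locally, on $B_\varepsilon(\cala)$) positive definite with respect to the set $\cala$ after subtracting its minimum value. Differentiating along the closed-loop dynamics \eqref{matobj} and using the skew-symmetry of $\calj$ and $\calr=\calr^\top\ge 0$ from \eqref{jdrd}, I expect
\begequ
\dot H = \nabla^\top H\, f_{\tt cl}(x) = -\nabla^\top H(x)\,\calr(x)\,\nabla H(x) \le 0,
\endequ
the interconnection term dropping out by skew-symmetry. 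This gives a nonincreasing energy and confirms stability of $\cala$ in the sense of Lyapunov (for trajectories starting in a sufficiently small $B_\varepsilon(\cala)$).

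\medskip

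\noindent \textbf{Step 2 (LaSalle and attractivity).} With $\dot H\le 0$ and $H$ radially bounded on $B_\varepsilon(\cala)$, I would apply the invariance principle for sets: trajectories converge to the largest invariant set contained in $\{x : \dot H = 0\}\cap B_\varepsilon(\cala)$, which is exactly the set $\calq$ defined in H1. By hypothesis H1, that largest invariant set is $\cala$ itself, so every trajectory starting near $\cala$ satisfies $\liminf \|x(t)\|_\cala = 0$. This delivers attractivity of $\cala$, i.e., \emph{set} stabilization.

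\medskip

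\noindent \textbf{Step 3 (ruling out equilibria — the main obstacle).} The crux, and the step I expect to be hardest, is upgrading set stabilization to genuine \emph{orbital} stabilization in the sense of Remark \ref{rem1}: I must show $f_{\tt cl}(x)\neq 0$ for all $x\in\calo=\cala$, so that no equilibrium lurks on the orbit and a true periodic motion is produced. Here is where H2 enters. On $\cala$ we have $\Phi(x_p)=0$, hence $\nabla H$ is aligned with the gradient direction coming from the $x_0$-slot of $H_0$ and, using the chain rule, the component of $f_{\tt cl}=(\calj-\calr)\nabla H$ tangent to $\calc$ is driven by the off-diagonal interconnection term $\calj_{(1,2)}$. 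The parameterization \eqref{param1} is engineered so that $\calj_{(1,2)}\,\nabla_{x_0}H_0 = c(x)$, and the bound $0<|c(x)|<\infty$ on $\cala$ forces this tangential component to be strictly nonzero everywhere on $\cala$. Concretely, I would compute $f_{\tt cl}|_{\cala}$ explicitly, show that on $\cala$ the damping contribution $\calr\nabla H$ does not cancel this nonzero tangential flow (using that $\cala\subset\calq$ means $\nabla^\top H\calr\nabla H=0$ there, which constrains $\calr\nabla H$), and thereby conclude $f_{\tt cl}|_\cala\neq 0$. Verifying \eqref{fneq0} rigorously — tracking how $\nabla H$ degenerates as $\Phi\to 0$ while the ratio in \eqref{param1} stays finite and nonzero — is the delicate bookkeeping that makes the periodic (as opposed to equilibrium) nature of the limit set precise. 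Combining Steps 1--3 yields asymptotic orbital stability of $\cala$.
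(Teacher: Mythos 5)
Your proposal follows essentially the same route as the paper's proof: $H$ as a set-Lyapunov function with $\dot H=-\nabla^\top H\,\calr\,\nabla H\le 0$, LaSalle plus H1 for attractivity of $\cala$, and then H2 to show that the residual tangential flow on $\cala$ equals $c(x)\nabla\Phi(x_p)\neq 0$, so no equilibria lie on the orbit. The only substantive step you leave as a plan rather than execute is the explicit computation $\calj_{pp}\nabla_{x_p}H=\begin{bmatrix}0 & c(x)\\ -c(x) & 0\end{bmatrix}\nabla\Phi(x_p)$ (noting that $\nabla H$ vanishes identically on $\cala$ so the damping and all finite interconnection terms drop out, and $\dot\Phi=0$ makes $\calc$ invariant), which is exactly how the paper closes the argument.
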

\begin{proof}\rm
The closed-loop system takes the form
\begequ
\label{pH}
\dot{x} = [\calj(x) - \calr(x)]\nabla H.
\endequ
From the isolated minimum condition of $H_0(x_0,x_\ell)$ stated in \eqref{A2}, we conclude that the function $H(x)$ has minima in the set $\cala$. Consequently,
\begequ
\label{HA}
\nabla H(x)\big|_{x \in \cala} =0, \;
{\nabla^2 H_0(x)\big|_{x \in B_\varepsilon(\cala)} >0}
\endequ
for some $\varepsilon>0$. This shows that the set $\calq$---containing the set $\cala$---is non-empty.

From the closed-loop pH dynamics it is clear that
$$
\dot{H} = - \nabla^\top H(x) \calr(x) \nabla H(x) \le 0,
$$
implying the boundedness of $H(x)$. Together with \eqref{HA}, we conclude the Lyapunov stability of the closed-loop system with respect to $\cala$. Thus, given a parameter $\varepsilon>0$, there always exists an \emph{invariant} set $\cale$ such that
$
\cala \subset \cale \subset B_\varepsilon(\cala).
$
Now, from the first equation of \eqref{HA} we get
$$
\nabla^\top H(x) \calr (x) \nabla H(x) \big|_{x \in \cala} =0.
$$
Applying LaSalle's invariance principle, taking into account the trajectory boundedness in $\cale$, and the assumption {\em H1}, we prove the attractivity of $\cala$, that is,
$$
\lim_{t\to \infty} \|x(t)\|_\cala = 0,\quad  \forall x(t_0) \in \cale.
$$

The proof is completed establishing the existence of the periodic orbit, that is, verifying \eqref{fneq0}. Consider the term $\calj_{pp}(x)\nabla_{x_p} H(x)$ of the closed-loop dynamics:
\begalis{
&\calj_{pp}(x)\nabla_{x_p} H(x) = \calj_{pp}(x) \nabla_{x_0} H_0(x_0,x_\ell) \nabla \Phi(x_p)\\
&= \begmat{0 & \calj_{(1,2)}(x) \\ - \calj_{(1,2)}(x) & 0} \nabla_{x_0} H_0(x_0,x_\ell) \nabla \Phi(x_p) \\
& =\begmat{0 & c(x) \\ - c(x) & 0} \nabla \Phi(x_p),
}
where we applied in the first identity the chain rule
$
\nabla_{x_p} H(x) = \nabla_{x_0} H_0(x_0,x_\ell) \nabla \Phi(x_p)
$, and used assumption {\em H2} in the third one. Considering that $\nabla H(x)\big|_{x \in \cala} =0$, the residual dynamics is
\begalis{
\dot{x}_p =  \begmat{0 & c(x) \\ - c(x) & 0} \nabla \Phi(x_p) , \quad
  \dot{x}_\ell  = 0.
}
Now, from $\dot{\Phi} = 0$, we conclude that the set $\calc$ is invariant. To prove that the set $\cala$ is a periodic orbit, we compute the $1$-norm of $\dot{x}_p$ as
$$
\|f_{\tt cl}(x)\|_1 = |c(x)| \|\nabla \Phi(x_p)\|_1 >0, \quad \forall x \in \cala.
$$
With the additional Jordan curve assumption, the existence of a periodic orbit is verified, completing the proof.
\qed
\end{proof}
\begin{remark}\rm
\lab{rem2}
The minimum condition \eqref{A2} implies that $\nabla_{x_0} H_0(x_0,x) =0$. Consequently, in view of condition \eqref{param1}, the mapping $\calj_{p\ell}(x)$ is singular along the orbit. However, the closed-loop dynamics and the feedback law \eqref{ida} are well-defined everywhere. If the term $\calj_{(1,2)}(x)$ is bounded along the orbit the condition \eqref{fneq0} is violated. Consequently, the ``infinite interconnection" condition {\em H2}  is {\em necessary} to ensure the orbit exists, otherwise we only achieve set stabilization---see Remark \ref{rem1}.
\end{remark}

\subsection{Energy pumping-and-damping PBC}
\label{subsec52}

In this subsection, we introduce an alternative orbital stabilization methodology: EPD-PBC---where the periodic orbit is enforced by regulating the \emph{energy level} to a constant value. More precisely, we assume the total energy of the closed-loop can be decomposed as
\begequ
\lab{hdphdl}
H(x):=H_{p}(x_p)+H_{\ell}(x_\ell).
\endequ
The function that defines the Jordan curve \eqref{calc} is given as
\begequ
\lab{phi}
\Phi(x_p):= H_{p}(x_p) - H_{p}^*,
\endequ
with $H_{p}^*$ the desired energy level for $H_{p}(x_p)$, which should be ``above" the minimal value of $H_p(x_p)$, that is, it should satisfy
$$
H_{p}^* > \min({H_p}(x_p)).
$$
To enforce the oscillation, the ``sign" of the damping matrix $\calr(x)$ changes according to the position of the state $x$ relative to the desired oscillation---whence, to the set $\calc$. See Fig. \ref{fig:pd}.

Similarly to Assumption \ref{ass1} for MSEA-PBC, in EPD-PBC we require that the PDE \eqref{pde} is solvable, with an additional constraint on $\calr(x)$ to implement the energy pumping-and-damping mechanism.

\begin{assumption}\rm
\label{ass2}
There exist functions $H_{p}(x_p)$ and $H_{\ell}(x_\ell)$, which have isolated minima in $x_p^* \in \rea^2$ and $x_\ell^*\in \rea^{n-2}$, respectively, and mappings \eqref{jdrd}, with
\begalis{
\calr(x) & =\diag\{\calr_{pp}(x),\calr_{\ell\ell}(x_\ell)\},
}
where $\calr_{\ell\ell}(x_\ell)  \geq 0$ and {the diagonal matrix} $\calr_{pp}(x)$ satisfies the pumping-and-damping condition
  \begequ
  \label{pd-prop}
   \calr_{pp}(x)\Phi(x_p)\ge 0
  \endequ
where $\Phi(x_p)$ is given in \eqref{phi}, and
  \begequ
  \label{pd-prop2}
    \calr_{pp}(x) =0 \quad \Longleftrightarrow \quad \Phi(x_p) =0.
  \endequ
\end{assumption}

In EPD-PBC besides the detectability-like and the interconection conditions, we require a technical assumption to complete the proof. That is, $\nabla^\top H_{p}(x_p)\calj_{p\ell}(x) = 0$, in order to ``cut off" the energy flow between $x_\ell$ and $x_p$ partitions.

\medskip{}

\begin{proposition}
\label{pro3}\rm

Consider the system \eqref{fgsys}, verifying Assumption \ref{ass2}, in closed-loop with the control law $u=\hat u(x)$ with $\hat u(x)$ given in \eqref{ida}. Assume the following.
\begenu
  \item[\em H3] \{$x_\ell^{*}$\} is the largest invariant set in the set
 $$
 \big\{x_\ell \in \rea^{n-2} \big| \nabla^\top  H_{\ell}(x_\ell) \calr_{{\ell\ell}}(x_\ell) \nabla H_{\ell}(x_\ell) =0 \big\}.
 $$
\item[\em H4] The matrix {$\calj(x)$} satisfies
$$
\calj_{(1,2)}(x) \neq 0,\quad \nabla^\top H_{p}(x_p)\calj_{p\ell}(x) = 0.
$$
\item[\em H5] {For some $\varepsilon_* >0$}
\begalis{
\nabla^2 H_p|_{x_p\in B_{\varepsilon_*}(x_p^*)}  >0 , \;
\max_{B_{\varepsilon_*}(x_p^*)} H_p(x_p)  > H_p^*.
}
\endenu

Then, the closed-loop system is asymptotically orbitally stable with respect to the orbit
$
{\cala \cap  B_{\varepsilon_*}(x_*)\ }.
$
\end{proposition}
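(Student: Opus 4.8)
The plan is to exploit the port-Hamiltonian structure of \eqref{pH} together with the block-diagonal form of $\calr(x)$ and the energy-decoupling condition in \emph{H4} to split the analysis into an $x_\ell$-subsystem, which must converge to $x_\ell^*$, and an $x_p$-subsystem, whose energy $H_p$ must be driven to the prescribed level $H_p^*$. First I would substitute $\nabla H=\col(\nabla_{x_p} H_p,\nabla_{x_\ell} H_\ell)$ into \eqref{pH} and compute the two partial energy rates. Using skew-symmetry of $\calj_{pp}$ and $\calj_{\ell\ell}$, the identity $\calj_{\ell p}=-\calj_{p\ell}^\top$, and the condition $\nabla^\top H_p\calj_{p\ell}=0$ of \emph{H4}, all cross terms cancel and one obtains the decoupled balances
\begin{align*}
\dot H_p = -\nabla_{x_p}^\top H_p\,\calr_{pp}(x)\,\nabla_{x_p} H_p,\qquad
\dot H_\ell = -\nabla_{x_\ell}^\top H_\ell\,\calr_{\ell\ell}(x_\ell)\,\nabla_{x_\ell} H_\ell .
\end{align*}

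The candidate Lyapunov function is $V(x):=\hal\Phi^2(x_p)+\big(H_\ell(x_\ell)-H_\ell(x_\ell^*)\big)$, which is non-negative and vanishes exactly on $\cala$, since $x_\ell^*$ is the isolated minimizer of $H_\ell$ and $\Phi=0$ defines $\calc$. Differentiating and using $\dot\Phi=\dot H_p$ gives
\begin{align*}
\dot V = -\nabla_{x_p}^\top H_p\,\big(\Phi(x_p)\calr_{pp}(x)\big)\,\nabla_{x_p} H_p
-\nabla_{x_\ell}^\top H_\ell\,\calr_{\ell\ell}(x_\ell)\,\nabla_{x_\ell} H_\ell \le 0,
\end{align*}
the first term being non-positive by the pumping-and-damping property \eqref{pd-prop} and the second by $\calr_{\ell\ell}\ge0$. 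This establishes Lyapunov stability of $\cala$. For attractivity I would invoke LaSalle's principle: on the largest invariant set contained in $\{\dot V=0\}$, condition \emph{H3} forces $x_\ell=x_\ell^*$ (hence $\nabla_{x_\ell}H_\ell=0$), while \eqref{pd-prop2} together with the local positive-definiteness of $\nabla^2 H_p$ from \emph{H5} forces either $\Phi=0$ (the orbit) or $\nabla_{x_p}H_p=0$ (the interior critical point $x_p^*$).

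The crucial obstacle is that this interior critical point $(x_p^*,x_\ell^*)$ is itself a closed-loop equilibrium lying inside $\calc$, so a naive LaSalle argument would wrongly include it in the limit set. I would exclude it by working on a sublevel set $\Omega_c:=\{x\,|\,V(x)\le c\}$ with $c<\hal\big(H_p(x_p^*)-H_p^*\big)^2$: since $V(x_p^*,x_\ell^*)=\hal(H_p(x_p^*)-H_p^*)^2$ while $V|_\cala=0$ and $V$ is non-increasing, $\Omega_c$ contains $\cala$ but not the equilibrium. The condition $\max_{B_{\varepsilon_*}(x_p^*)}H_p>H_p^*$ in \emph{H5} guarantees that $\calc\cap B_{\varepsilon_*}(x_p^*)$ is a single closed Jordan curve strictly inside the convexity region, so $\Omega_c$ can be taken within $B_{\varepsilon_*}(x_*)$ and cleanly separates the target orbit from the unstable ``pumping'' equilibrium; on this set the largest invariant set reduces to $\cala$, yielding $\lim_{t\to\infty}\|x(t)\|_\cala=0$. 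Finally, to certify a genuine periodic orbit rather than mere set convergence (cf.\ Remark~\ref{rem1}), I would evaluate the flow on $\cala$, where $\calr_{pp}=0$ and $\nabla_{x_\ell}H_\ell=0$, so that $\dot x_p=\calj_{pp}\nabla_{x_p}H_p$; since $\calj_{(1,2)}\neq0$ by \emph{H4} and $\nabla_{x_p}H_p\neq0$ on $\calc\cap B_{\varepsilon_*}(x_p^*)$ (as $x_p^*\notin\calc$), this tangent flow never vanishes, establishing \eqref{fneq0}, and a non-vanishing tangent field on the compact Jordan curve is a periodic orbit. The central difficulty throughout is precisely this interplay between the sign-indefinite damping---which makes $V$ only a weak Lyapunov function and creates the interior equilibrium---and the localization via $B_{\varepsilon_*}$ needed to separate the desired orbit from that equilibrium.
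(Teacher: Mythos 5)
Your overall architecture matches the paper's: decoupled energy balances via \emph{H4}, the function $\hal\Phi^2$ driven by the pumping-and-damping inequality \eqref{pd-prop}, LaSalle, separation of the desired orbit from the interior equilibrium $x_*$, and a non-vanishing tangent field $\dot x_p=\calj_{pp}\nabla H_p$ on the orbit. Your sublevel-set device for excluding $x_*$ (in place of the paper's instability argument with $W=\Phi$) is a legitimate variant, and folding the two energies into a single $V$ is harmless.

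There is, however, a genuine gap in your LaSalle step. You assert that $\nabla^\top H_p\,(\Phi\calr_{pp})\,\nabla H_p=0$ forces either $\Phi\calr_{pp}=0$ (hence $\Phi=0$ by \eqref{pd-prop2}) or $\nabla H_p=0$. That dichotomy fails when $\Phi\calr_{pp}$ is positive semi-definite but singular and nonzero: then $\nabla H_p$ may lie in the nontrivial kernel of $\calr_{pp}$ with both factors nonzero. This is not a pathological corner case---Assumption \ref{ass2} only requires $\calr_{pp}$ to be diagonal with $\calr_{pp}\Phi\ge0$, and the paper's own suggested choice $\calr_{pp}=\diag(0,\Phi(x_p))$ is singular everywhere. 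The paper devotes the core of its proof to ruling out precisely this third case: it supposes the corresponding set $\calq_{iii}$ were invariant, observes that there the residual dynamics is $\dot x_p=\calj_{pp}\nabla H_p$ with $\det\calj_{pp}\neq0$ (so $H_p$ is constant along nontrivial motions), and then uses the diagonal structure of $\calr_{pp}$ to conclude that one component of $\nabla H_p$ would have to vanish identically, contradicting that constancy. Without this (or an equivalent) argument your limit set could a priori contain points off $\calc$ where $\nabla H_p\in\ker\calr_{pp}$, and the conclusion $\lim_{t\to\infty}\|x(t)\|_\cala=0$ does not follow. Everything else in your proposal survives once this case is closed.
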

\begin{proof}\rm
The closed-loop dynamics takes the form \eqref{pH} with $\nabla H=\col(\nabla H_p(x_p) , \nabla H_\ell(x_\ell))$. From which it is clear that
$$
\begin{aligned}
\dot{H}_\ell & =  - \nabla^\top H_\ell \big( \calr_{\ell\ell}(x_\ell) \nabla H_\ell + \calj^\top_{p\ell}(x) \nabla H_p \big) \\
& =  - \nabla^\top H_\ell \calr_{\ell\ell}(x_\ell) \nabla H_\ell  \le 0,
\end{aligned}
$$
where we have used the assumption {\em H4}. Applying LaSalle's invariance principle and using the assumption {\em H3}, we have
$$
\lim_{t\to\infty} x_\ell(t) = x_\ell^*.
$$
For $H_p(x_p)$ we have
$$
\begin{aligned}
\dot{H}_{p} & = - \nabla^\top H_p \big( \calr_{pp}(x) \nabla H_p - \calj^\top_{p\ell}(x) \nabla H_\ell \big)\\
& = - \nabla^\top H_p  \calr_{pp}(x) \nabla H_p,
\end{aligned}
$$
where we used again the assumption {\em H4}.

Consider the function
$
V(x_p) := {1\over 2} \Phi^2(x_p),
$
we have
$$
\begin{aligned}
   \dot{V} & = - \nabla^\top H_p [\Phi(x) \calr_{pp}(x) ]\nabla H_p \le 0,
\end{aligned}
$$
where the inequality is the consequence of the pumping-and-damping condition \eqref{pd-prop}. Invoking LaSalle's invariance principle, the state ultimately converges into the largest invariant set of the set
$$
\calq= \{x\in \rea^n | x_\ell =x_\ell^*,\; [\Phi(x) \calr_{pp}(x) ]\nabla H_p(x) =0 \}.
$$
There are three cases of $ [\Phi(x) \calr_{pp}(x) ]\nabla H_p(x) =0$, namely,
\begin{itemize}
  \item[i)] $ [\Phi(x) \calr_{pp}(x) ] =0$;
  \item[ii)] $\nabla H_p(x) =0$;
  \item[iii)] $\Phi(x) \calr_{pp}(x) \neq {\bf 0}$ and $\nabla H_p(x) \neq {\bf 0}$ with
  $$
  \nabla H_p(x) \in \mathtt{Ker}(\Phi(x) \calr_{pp}(x) )= \mathtt{Ker}(\calr_{pp}(x) ).
  $$
\end{itemize}

First consider Case iii) with the definition
$$
\mathcal{Q}_{iii} := \{x\in\rea^n | x_\ell = x_\ell^* \text{ and } x \text{ satisfies Case iii)} \}.
$$
We will prove that $\calq_{iii}$ is not an invariant set by contradiction. Assume $\calq_{iii}$ is invariant along the closed-loop dynamics. On $\calq_{iii}$ the residual dynamics is
$$
\begin{aligned}
 \dot{x}_p & = [\calj_{pp}(x) - \calr_{pp}(x)] \nabla H_p(x_p)\\
\end{aligned}
$$
with $x_\ell = x_\ell^*$. Since $\nabla H_p(x_p) \in \mathtt{Ker}(\calr_{pp}(x))$, we have
\begequ
\label{case3}
\dot{x}_p = \calj_{pp}(x)\nabla H_p.
\endequ
Assumption {\em H4} ensures $\det(\calj_{pp}(x)) \neq 0$, hence from \eqref{case3} we conclude that there are no equilibrium points in ${\calq}_{iii}$. From \eqref{case3} we also conclude that
\begequ
\label{H_pconst}
 H_p(x_p(t)) \equiv \text{const}, ~ \; \forall x\in {\calq}_{iii}, \; t\ge 0.
\endequ
 Noticing the diagonal condition of $\calr_{pp}(x)$, together with $\det(\calr_{pp})=0$, $\calr_{pp} \neq {\bf 0}$ for Case iii) and $ \nabla H_p(x) \in  \mathtt{Ker}(\calr_{pp}(x) )$, we then have
$
\nabla_{x_{p1}} H_p \equiv 0 \; \text{or} \;
\nabla_{x_{p2}} H_p \equiv 0,
$
that contradicts the identity \eqref{H_pconst}. Therefore, the set ${\calq}_{iii}$ is not invariant, excluding the possibility of Case iii).

For Case i), from \eqref{pd-prop2} we have
$$
\Phi(x) \calr_{pp}(x) = 0 \quad \Rightarrow \quad \Phi(x_p) = 0.
$$
{Together with $\dot{\Phi} =0$ for all $x\in \cala$, it implies the invariance of Case i).} For Case ii), it yields $x = \col(x_p^*, x_\ell^*):= x_*$. In summary, the largest invariant set in ${\calq}$ is $\cala\cup \{x_*\}$.

We consider the function
$
W(x) = \Phi(x),
$
and, for some small $\varepsilon>0$, it follows
$$
\dot{W} = - \nabla^\top H_p \calr_{pp}(x) \nabla H_p \ge0, \quad \forall x_p \in B_\varepsilon (x_p^*).
$$
Therefore, the isolated equilibrium point $x_*$ is unstable. On the other hand, the set $\cala$ is attractive.

We proceed now to verify the existence of a periodic orbit. Since $x_*$ is a minimum of $H(x)$ we have that
$$
\nabla H(x)\big|_{x=x_*} =0, \;
\nabla^2 H(x)\big|_{x\in B_\varepsilon(x_*)} >0.
$$
If $x\in B_\varepsilon(x_*)$, the function $\mathcal{V}(x):= H(x) - H(x_*)$ qualifies as a Lyapunov function (for the dynamics $\dot{x} = F(x)\nabla H(x)$ with $F(x)>0$). According to \cite[Theorem 4.1]{BYR}, the set $\calc\cap B_{\varepsilon_*}(x_p^*)$ defines a Jordan curve. On the set ${\cala \cap B_{\varepsilon_*}(x_*) }$ the residual dynamics is
$$
\begin{aligned}
\dot{x}_p  = \begmat{0 & \calj_{(1,2)}(x) \\ - \calj_{(1,2)}(x) & 0} \nabla H_p, \;
\dot{x}_\ell   = 0.
\end{aligned}
$$
We conclude $|f_{\tt cl}(x)| \neq 0$, completing the proof.
\qed
\end{proof}

\begin{remark}
\rm
The condition {\em H4} is similar, in nature, to {\em H2}, but excluding equilibria on the orbit ${\cala \cap B_{\varepsilon_*}(x_*) }$. Noticing that $|\nabla H_{p}| \neq 0$ on the desired orbit, only ``finite interconnection'' is adequate in the EPD method for the purpose of orbital stabilization. An example of energy regulation without adequate interconnection is given in our previous work \cite{YIetalscl}, which solves the open problem---using \emph{smooth, time-invariant} state-feedback to achieve almost global asymptotic \emph{regulation} of three-dimensional nonholonomic systems.
\end{remark}
\begin{remark}
\rm
A trivial selection of the mapping $\calr_{pp}$ is $\diag(0,\Phi(x_p))$, but it is non-unique. This indeed provides an additional degree of freedom to solve the PDE, and the possibility to regulate the  speed of convergence.
\end{remark}
\subsection{Comparison of MSEA-PBC and EPD-PBC}
\label{subsec:relation}

In this section we compare the two methods and clarify the parallel between them. To simplify the presentation we relabel the various mappings used in the methods with fonts \texttt{mathcal} ($\mathcal{J, R, H}$) for MSEA-PBC and \texttt{mathbf} ($\mathbf{J,R,H}$) for EPD-PBC. We have the following.
\begin{proposition}
\label{pro4}\rm
Consider the system \eqref{fgsys}, verifying all the assumptions in Proposition \ref{pro1}. Assume the matrix $\frakr$ is diagonal, $\frakr_{\ell\ell}$ is a function of $x_\ell$, and $\frakr_{pp}$ is non-zero. If the mapping $\frakh_0 : \rea^{n-1} \to \rea$ can be decomposed as
\begequ
\label{decomp}
\frakh_0 (x_0,x_\ell) = \frakh_{1}(x_0) + \frakh_{\ell}(x_\ell),
\endequ
and
$
\nabla^\top \Phi (x_p) \frakj_{p\ell}(x) =0,
$
then all the assumptions in Proposition \ref{pro3} are satisfied by selecting the mappings\footnote{The notation ${\frakh_{1}'}$ represents the derivative of $\frakh_{1}(x_0)$ with respect to $x_0$. We also have $\frakh_1' (x_0)= \nabla_{x_0} \frakh_0(x_0,x_\ell)$ according to \eqref{decomp}.}
\begequ
\label{conin}
\begin{aligned}
  \bfh_p(x_p) & =\Phi(x_p), \quad \quad
  \bfh_\ell (x_\ell) = \frakh_\ell(x_\ell) \\
 \bfj(x)  &=  \begin{bmatrix}  {\frakh_{1}'}\frakj_{pp}  & \frakj_{p\ell} \\ -\frakj_{p\ell}^\top & \frakj_{\ell\ell} \end{bmatrix},\;
 \bfr(x)  = \begin{bmatrix} {\frakh_{1}'} \frakr_{pp} & \mathbf{0} \\ \mathbf{0} & \frakr_{\ell\ell} \end{bmatrix},
\end{aligned}
\endequ
and $\bfh_{p}^*=0$. Furthermore, the MSEA and EPD methods yield the same feedback law.
\end{proposition}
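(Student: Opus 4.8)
The plan is to read Proposition \ref{pro4} as a construction-and-verification statement: the EPD data \eqref{conin} are built from the MSEA data, and the work splits into (i) showing that the two closed-loop vector fields coincide---which at once yields the equality of the feedback laws and the fact that \eqref{conin} inherits solvability of the matching PDE \eqref{pde}---and (ii) checking that \eqref{conin} meets Assumption \ref{ass2} together with H3--H5 of Proposition \ref{pro3}. The symmetry requirements \eqref{jdrd} for the EPD data are immediate: $\bfr$ in \eqref{conin} is diagonal hence symmetric, and $\bfj$ is skew because $\frakh_1'\frakj_{pp}$ is a scalar multiple of a skew block while the off-diagonal blocks are $\frakj_{p\ell}$ and $-\frakj_{p\ell}^\top$.

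The decisive step, which I expect to be the cleanest, is the equivalence of the closed-loop dynamics. Writing $\mathcal{H}(x)=\frakh_1(\Phi(x_p))+\frakh_\ell(x_\ell)$, the chain rule gives $\nabla_{x_p}\mathcal{H}=\frakh_1'(\Phi)\nabla\Phi$ and $\nabla_{x_\ell}\mathcal{H}=\nabla\frakh_\ell$, whereas from \eqref{conin} one has $\nabla\bfh=\col(\nabla\Phi,\nabla\frakh_\ell)$. Substituting the blocks of \eqref{conin} and using that the scalar $\frakh_1'$ commutes through the $pp$-block, the $x_p$-rows of $[\bfj-\bfr]\nabla\bfh$ reproduce $\frakh_1'(\frakj_{pp}-\frakr_{pp})\nabla\Phi+\frakj_{p\ell}\nabla\frakh_\ell$, i.e. exactly the $x_p$-rows of $[\frakj-\frakr]\nabla\mathcal{H}$. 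In the $x_\ell$-rows the only apparent discrepancy is the factor $\frakh_1'$ multiplying the coupling term $\frakj_{p\ell}^\top\nabla\Phi$, and this term vanishes in both expressions by the standing hypothesis $\nabla^\top\Phi\,\frakj_{p\ell}=0$. Hence $[\frakj-\frakr]\nabla\mathcal{H}=[\bfj-\bfr]\nabla\bfh=:f_{\tt cl}$; since the control \eqref{ida} reads $\hat u=g^\dagger(f_{\tt cl}-f)$ and depends only on $f_{\tt cl}$, the two feedback laws coincide, and the EPD data automatically satisfy the matching PDE $g^\bot f=g^\bot f_{\tt cl}$ because the MSEA data already do.

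I would then verify the hypotheses of Proposition \ref{pro3}. Condition H4 is the most telling check: the $(1,2)$-entry of $\bfj$ is $\frakh_1'\frakj_{(1,2)}$, and inserting the MSEA parameterization \eqref{param1}, namely $\frakj_{(1,2)}=c(x)/\frakh_1'$, makes the singular factor cancel, giving $\bfj_{(1,2)}=c(x)$, which is finite and nonzero on $\cala$; the second part $\nabla^\top\bfh_p\,\frakj_{p\ell}=\nabla^\top\Phi\,\frakj_{p\ell}=0$ is the standing hypothesis. For the pumping-and-damping structure of Assumption \ref{ass2} I would use $\bfr_{pp}=\frakh_1'(\Phi)\frakr_{pp}$ together with $\frakr_{pp}\ge0$ and the sign property $\frakh_1'(s)\,s\ge0$ that follows from $\frakh_1$ having an isolated minimum at $s=0$: this yields $\bfr_{pp}\Phi\ge0$, and since $\frakr_{pp}$ is non-zero and $\frakh_1'$ vanishes only at $s=0$, the equivalence \eqref{pd-prop2}. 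The minimum of $\bfh_\ell=\frakh_\ell$ at $x_\ell^*$ is immediate from the decomposition \eqref{decomp} and \eqref{A2}, and H3 is inherited from the detectability built into H1 restricted to the $x_\ell$-coordinates, the damping $\frakr_{\ell\ell}$ entering both invariance sets identically.

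The main obstacle I anticipate is the part of Assumption \ref{ass2} and of H5 concerning $\bfh_p=\Phi$ itself: these demand that $\Phi$ possess an \emph{isolated} minimum $x_p^*$ with $\nabla^2\Phi>0$ nearby and $\max_{B_{\varepsilon_*}(x_p^*)}\Phi>0=\bfh_p^*$, whereas Proposition \ref{pro1} only controls the composite $\frakh_1\circ\Phi$ and places its minima on the entire curve $\calc$. I would close this gap by appealing to the geometry underlying the Mexican sombrero: the Jordan curve $\calc=\{\Phi=0\}$, on which $\nabla\Phi\neq0$, bounds a region across which $\Phi$ changes sign, so $\Phi$ attains an interior extremum; identifying it with $x_p^*$ and taking $\varepsilon_*$ large enough to reach points where $\Phi>0$ delivers H5, the non-degeneracy $\nabla^2\Phi>0$ being precisely the ``bowl'' that, composed with $\frakh_1$, forms the sombrero. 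This is the sole step that is geometric rather than algebraic, and it is where I would concentrate the argument; everything else reduces to the scalar bookkeeping of the factor $\frakh_1'$.
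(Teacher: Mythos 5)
Your overall route is the same as the paper's: establish that the two closed-loop vector fields coincide (hence the matching PDE for the EPD data is solvable and, via \eqref{ida}, the feedback laws agree), and then check Assumption \ref{ass2} together with \emph{H3}--\emph{H5}. The algebraic part of your argument is correct and in places more explicit than the paper's: the cancellation $\bfj_{(1,2)}=\frakh_1'\cdot c(x)/\frakh_1'=c(x)$, which turns the ``infinite interconnection'' of \emph{H2} into the finite nonzero interconnection required by \emph{H4}, is exactly the right observation, and your derivation of \eqref{pd-prop}--\eqref{pd-prop2} from $\frakr_{pp}\ge 0$, $\frakr_{pp}\neq \mathbf{0}$ and the sign of $\frakh_1'$ near its isolated minimum matches the paper's computation.

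The genuine gap is in the step you yourself flag as the main obstacle: showing that $\bfh_p=\Phi$ has an isolated minimum $x_p^*$ with $\nabla^2\bfh_p>0$ on a neighbourhood. From the data of Proposition \ref{pro1} you only know that $\calc=\{\Phi=0\}$ is a Jordan curve with $\nabla\Phi\neq 0$ on it. Your argument---$\Phi$ changes sign across $\calc$, hence attains an interior extremum---produces a critical point, but nothing forces that critical point to be unique, to be a \emph{minimum} (if $\Phi>0$ in the interior the extremum is a maximum and $\bfh_p$ has no interior minimum at all), or to be non-degenerate; a level set with nonvanishing gradient says nothing about the Hessian at interior critical points, and $\Phi$ could perfectly well have two degenerate minima and a saddle inside $\calc$. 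Calling the non-degeneracy ``precisely the bowl'' is an assertion, not a proof. The paper closes this by a normal-form argument: since $\calc$ is diffeomorphic to the unit circle, one writes $\Phi(x_p)=|T(x_p)|^2-1$ for a smooth invertible $T$ with $\nabla T\neq 0$, sets $x_p^*=T^{-1}(\mathbf{0})$, and computes $\nabla^2\bfh_p=2(\nabla T)^\top\nabla T+2\sum_{i=1}^2 T_i\nabla^2 T_i$, which is positive definite near $x_p^*$ because $T_i(x_p^*)=0$ there. That representation simultaneously fixes the sign of $\Phi$ in the interior and supplies the strict convexity near $x_p^*$; some structure of this kind on $\Phi$ is indispensable, and your sketch does not provide it.
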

\begin{proof}\rm
We first verify the solvability of the matching PDEs---equivalently the coincidence of two closed-loop dynamics. The closed-loop dynamics in Proposition \ref{pro1} is
$$
\begin{aligned}
    \dot{x} & = \big[ \frakj(x) - \frakr(x) \big] \nabla \frakh_0(\Phi(x_p), x_\ell) \\
                & = \begin{bmatrix} \frakj_{pp}(x)  - \frakr_{pp}(x) & \frakj_{p\ell}(x) \\ -\frakj^\top_{p\ell}(x) & \frakj_{\ell\ell}(x) - \frakr_{\ell\ell}(x) \end{bmatrix}
                       \begin{bmatrix}  {\frakh_{1}'} \nabla \Phi \\ \nabla \frakh_{\ell}  \end{bmatrix}\\
                & = \begin{bmatrix}   {\frakh_{1}'} (\frakj_{pp}(x) - \frakr_{pp}(x))  & \frakj_{p\ell}(x) \\ -\frakj^\top_{p\ell}(x) & \frakj_{\ell\ell}(x) - \frakr_{\ell\ell}(x) \end{bmatrix}
                        \begin{bmatrix}  \nabla \Phi \\ \nabla \frakh_{\ell}  \end{bmatrix}\\
                & = \big[ \bfj(x) - \bfr(x) \big] \nabla \bfh(x),
\end{aligned}
$$
where we have used the assumption $\nabla^\top \Phi (x_p) \frakj_{p\ell}(x) =0$ in the third equality. It is obvious that the closed-loop dynamics in Proposition \ref{pro3} is \emph{exactly} the same with the one in Proposition \ref{pro1}. The matching PDE in Proposition \ref{pro3} is thus solvable.

Second, we will verify the assumptions in Proposition \ref{pro3}. With the decomposition \eqref{decomp}, we have
\begequ
\label{decomp2}
\eqref{A2}\quad \Longleftrightarrow \quad
\left\{
\begin{aligned}
\; \arg\min ~ \frakh_{1}(x_0) & = 0\\
\; \arg\min ~ \frakh_{\ell} (x_\ell) & = x_\ell^*,
\end{aligned}
\right.
\endequ
satisfying the convex properties of $\bfh_\ell(x_\ell)$ in Proposition \ref{pro3}. We then need to prove that there exists a point $x_p^*$ such that
\begequ
\label{Hp_prop}
\nabla \bfh_{p}(x_p^*)=0, \quad \nabla^2 \bfh_{p}(x) \big|_{x\in B_\varepsilon(x_p^*)}>0.
\endequ
with $\bfh_{p}(x_p)= \Phi(x_p)$. To this end, we notice that $\calc$ is diffeomorphic to the unit circle, and thus there exists a smooth mapping $T: \rea^2 \to \rea^2$ such that
$
\Phi(x_p) = |T(x_p)|^2 -1,
$
with $\nabla T \neq 0$ and its inverse mapping $T^{-1} (\cdot)$ is well-defined. By fixing $x_p^* = T^{-1}(\mathbf{0})$, we then have
$$
\begin{aligned}
\nabla \bfh_p\Big|_{x=x_p^*} & = 2 (\nabla T)^\top  T(T^{-1}(\mathbf{0})) =0\\
\nabla^2 \bfh_p \Big|_{x\in B_\varepsilon (x_p^*)} & =  2(\nabla T)^\top \nabla T + 2 \sum_{i=1}^{2}T_i(x_p)\nabla^2 T_i   >0
\end{aligned}
$$
for some small $\varepsilon>0$, where in the latter inequality we have used the continuity of the mapping and the fact $T_i(x_p^*)=0$. Thus we have verified the property of the Hamiltonian $\bfh(x)$ in Proposition \ref{pro3}.

To verify the condition \eqref{pd-prop}, we have
$$
\begin{aligned}
\bfr_{pp}(x) =  0 \quad & \Longleftrightarrow \quad  {\frakh_{1}'}\frakr_{pp}(x)=0 \\
                                  & \Longleftrightarrow \quad   {\frakh_{1}'}(x_0) \Big|_{x_0=\Phi(x_p)} =0 \\
                                  & \Longleftrightarrow \quad  \Phi(x_p) =0,
\end{aligned}
$$
where we have used the assumption $\frakr_{pp}(x) \neq \mathbf{0}$ in the second implication, and the fact ${\frakh_1'}|_{x_0=0}= 0$ in the last one. Therefore, the equation \eqref{pd-prop2} is satisfied. According to the property of $\frakh_1$, for sufficiently small $|x_0|$ we have ${ \frakh_{1}'(x_0) }<0$ for $x_0<0$ and ${\frakh_{1}'(x_0) }>0$ if $x_0>0$. It yields
$$
\bfr_{pp}(x) \Phi(x_p) = \frakr_{pp}(x)  \big[{\frakh_1'}(\Phi(x_p))\Phi(x_p)\big] \ge 0.
$$
Thus, the pumping-and-damping condition---inequality \eqref{pd-prop}---has been proved. The remaining assumptions {\em H3} and {\em H4} are trivially verified.
\qed
\end{proof}

\subsection{Discussions}
\label{sec44}

The following remarks are in order specifically emphasizing the connections between the proposed methods and existing methods.

\begin{remark}
\rm
In the problem formulation, we impose the two-dimensional partition of $x_p$. It may be argued to be peculiar and stringent. We should underscore that, in many cases, orbital stabilization tasks can be translated in to our case. We take the widely studied VHC method, though with a different mechanism from the proposed designs, for instance, and consider an Euler-Lagrange system with states $q\in\rea^N$ and $\dot{q}\in\rea^N$ the generalized coordinates and velocities. The simplified control task in VHC is to stabilize the invariant manifold
$$
\calm := \{ (q,\dot{q}) ~|~ \bar{q} = \alpha(q_N) \}, \quad \bar{q}:= \col(q_1, \ldots,q_{N-1}),
$$
with some mapping $\alpha: \rea \to \rea^{N-1}$, and guarantee the zero dynamics to admit non-trivial periodic solutions. It is clear that on the manifold we have
$
\dot{\bar{q}} = \eta(q_N, \dot{q}_N) := \nabla \alpha^\top(q_N) \dot{q}_N.
$
The above-mentioned task appropriately adopts to our problem formulation with $n=2N$ and the change of coordinate $x_p = \col(q_N, \dot{q}_N), \; x_\ell = \col(\bar{q} - \alpha(q_N),  \dot{\bar{q}} - \eta(q_N, \dot{q}_N)) $.
\end{remark}
\begin{remark}
\em
A main drawback of VHC and I\&I orbital stabilization technique is that the steady-state behavior cannot be fixed a priori, but depends on the initial states with a notable exception \cite{MOHetalaut}. The drawback can be circumvented with the IDA methods, but with an additional difficulty in solving PDEs.
\end{remark}

\begin{remark}
\rm\lab{rem3}
In \cite{ARCetaltac,DUISTRejc,GOMetalijrnlc} a similar MSEA approach is adopted for some specific dynamical systems. In particular, in \cite{DUISTRejc} the MSEA is imposed to the potential energy in a path following task for \emph{fully actuated} mechanical systems.
\end{remark}

\begin{remark}
\rm
In \cite{ASTetalauto} the pumping-and-damping injection is applied to \emph{stabilize} pendula at the upright equilibrium almost globally. Some works on energy regulation of nonlinear systems, though not aiming at oscillation generation, can be found in \cite{FRAPOG,GAROTTtac,SPO}.
\end{remark}

\begin{remark}
\rm
In \cite{STASEPtac} passive oscillators is constructed for Lure dynamical systems using ``sign-indefinite" feedback static mappings, see Fig. \ref{fig:lure}. After assigning the linearized system with a unique pair of conjugated poles on the imaginary axis, the ``sign-indefinite" feedback is adopted to regulate the energy achieving periodic oscillations. Indeed, \cite[Theorem 2]{STASEPtac} can be regarded as an EPD controller. It should be underscored that the center manifold theory is applied in the analysis where the center manifold plays the exactly same role as the invariant manifold in VHC. {Since the analysis of the latter is carried out applying the center manifold theory---whose nature is intrinsically local---the oscillators resulting from \cite[Theorem 2]{STASEPtac} are assumed to have small amplitudes. On the other hand, they circumvent the daunting task of solving PDEs. Whereas, the proposed EPD method has the ability to shape behaviors of the closed-loop dynamics, making it instrumental in engineering practice.}
\end{remark}

\begin{figure}{}
  \centering
  \includegraphics[width=4cm]{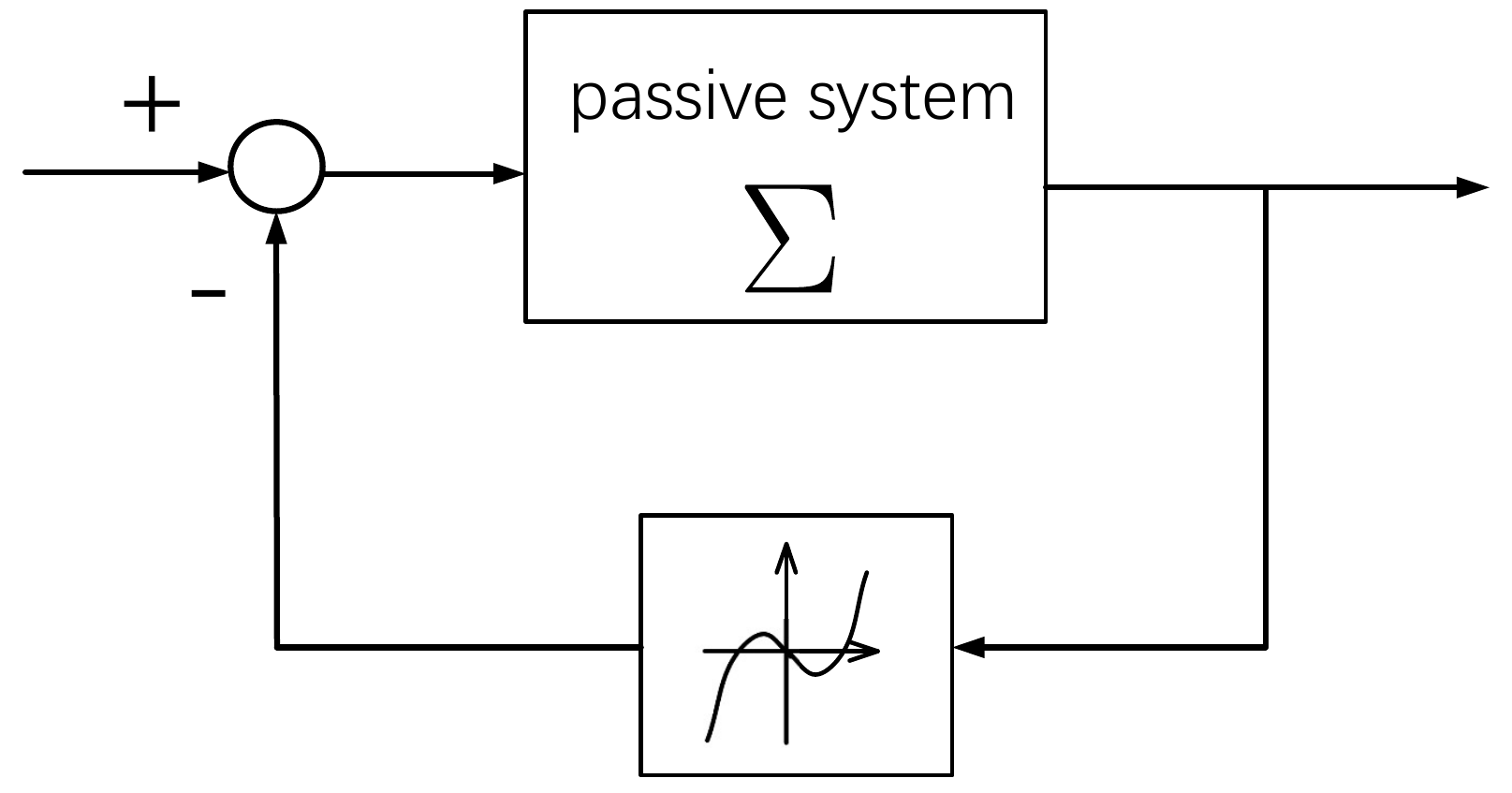}
  \caption{Lure system with sign-indefinite feedback}\label{fig:lure}
\end{figure}

\begin{remark}
\rm
{The assumptions in Proposition \ref{pro4} on the equivalence between two proposed methods are relatively mild, namely, the diagonalization of $\calr(x)$ and the decomposition of $\mathcal{H}_0(x)$. Despite the equivalence, the realms of applicability of the methods are slightly different. For instance, if a controlled plant endows a pH form, it may be easier to generate oscillations via EPD without solving PDEs; on the other hand, for some systems it is simple to shape the Hamiltonian, {\em e.g.}, fully-actuated mechanical systems, thus the MSEA method is preferred. }
\end{remark}

\section{Induction Motor Example}
\label{sec5}
\subsection{Dynamic model and control objective}
We consider the practical example of speed regulation of current-fed IMs. The normalized dynamics of the IM in the fixed frame is described by
\begequ
\label{im1}
\begin{aligned}
    \dot{\psi}_r & = -R\psi_r + \omega\bbj\psi_r + Ru \\
    \dot{\omega} & = u^\top \bbj \psi_r - \tau_L,
       \quad {\mathbb{J}}:= \begin{bmatrix} 0 & -1 \\ 1 & 0 \end{bmatrix},
\end{aligned}
\endequ
where $\psi_r \in \rea^2$ is the rotor flux, $\omega \in \rea$ is the rotor angular speed, $\tau_L \in \rea$ is the constant load torque, $R>0$ is the rotor resistance, $u\in\rea^2$ is the stator current, which is assumed to be the control and, without loss of generality, we have taken the rotor inertia to be equal to one---see \cite{ORTetalbook,MARbook} for further details. To show the basic idea, we make the assumption that $\tau_L=0$, which can be removed adding an integral in the control action \cite{ORTetalbook,MARbook}.

The control objective is to ensure the asymptotic orbital stabilization of the set
\begequ
\lab{cala1}
\cala:= \{x \in\rea^3 |\; \Phi(x_p)=0, \; x_\ell =  \omega_\star \},
\endequ
where we introduced the notation $x_p:=\psi_r$, $x_\ell:=\omega$, defined the function
\begequ
\lab{phixp}
\Phi(x_p) := |x_p | - \beta_\star,
\endequ
and $\beta_\star  > 0$, $\omega_\star   \in \rea,\;\omega_\star\neq 0$ are the desired (constant) references. {Intuitively, we may fix the desired Hamiltonian as \eqref{H0_Phi} then solving the PDE.}

\subsection{Orbital stabilization of the IM via MSEA-PBC}
In the following proposition we show that the aforementioned regulation problem of IMs can be solved via MSEA-PBC.

\begin{proposition}
\lab{pro4}
\rm
Consider the fixed-frame current-fed IM model \eqref{im1} and the target set \eqref{cala1}, \eqref{phixp}.
\begenu
\item[{\em P1}] Assumption \ref{ass1} is satisfied with the choices
\begin{equation}\label{solution_ex1}
  \begin{aligned}
    \calr = \begin{bmatrix}
             R & 0 & 0 \\ 0 & R & 0 \\ 0 & 0 & {k \over \beta_\star} |x_p|
           \end{bmatrix},
    \calj  = \begin{bmatrix}
                      0 & - { {x_\ell} |x_p|  \over |x_p| - \beta_\star}& {k R \over \beta_\star  }{x_2 \over |x_p|}\\
                      * & 0 & - {k R \over \beta_\star  } {x_1\over |x_p|} \\
                     *& *& 0
                    \end{bmatrix}
  \end{aligned}
\end{equation}
and
\begequ
\label{H0_Phi}
H_0(x_0, x_\ell) =  {1\over 2}x_0^2 + {1\over 2} ( x_\ell - \omega_\star )^2.
\endequ
\item[{\em P2}] The controller \eqref{ida} takes the form
\begequ
\label{FOC2}
\begin{aligned}
u & =
\Big[ \beta_\star  I_2 -   {k \over \beta_\star}(x_\ell  - \omega_\star)\bbj\Big]  {x_p \over |x_p|},\;k>0.
 \end{aligned}
\endequ
\item[{\em P3}] All the assumptions of Proposition \ref{pro1} are satisfied.
\endenu

Consequently, the closed-loop system is asymptotically orbitally stable with respect to \eqref{cala1}.  Moreover,  the convergence is exponential.

\end{proposition}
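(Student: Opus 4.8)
The plan is to dispatch the claims P1--P3 in sequence and then read off the exponential rate from the reduced dynamics. I first put \eqref{im1} in the form \eqref{fgsys}, with drift $f(x)=\col(-Rx_p+x_\ell\bbj x_p,\,0)$ and input matrix $g(x)=\col(RI_2,\,(\bbj x_p)^\top)$, and compute the gradient of the assigned energy. Writing $H(x)=H_0(\Phi(x_p),x_\ell)$ with $\Phi(x_p)=|x_p|-\beta_\star$ and $H_0$ as in \eqref{H0_Phi}, the chain rule yields $\nabla H=\col(\Phi\,x_p/|x_p|,\ x_\ell-\omega_\star)$, which will be used repeatedly.

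For P1 three items must be checked against Assumption~\ref{ass1}. That $H_0$ has an isolated minimum at $(0,\omega_\star)$ is immediate, since $H_0$ is a sum of squares with identity Hessian. That $\calj=-\calj^\top$ and $\calr=\calr^\top\ge0$ near the orbit (where $|x_p|>0$) follows by inspection of \eqref{solution_ex1}. The substantive step is the matching PDE \eqref{pde}. Rather than project by $g^\bot$, I will exhibit the residual $r(x):=[\calj-\calr]\nabla H-f$ and show $r(x)\in\mathrm{range}\,g$. Expanding $[\calj-\calr]\nabla H$ row by row, the off-diagonal terms carrying the factor $x_\ell$ cancel the rotational drift $x_\ell\bbj x_p$, the diagonal damping combines with the resistive drift $-Rx_p$ through the identity $1-\Phi/|x_p|=\beta_\star/|x_p|$, and one obtains $r(x)=g(x)\hat u(x)$ with $\hat u$ exactly \eqref{FOC2}; consistency of the third row then reduces to $x_1^2+x_2^2=|x_p|^2$.

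This same computation settles P2: since $r(x)\in\mathrm{range}\,g$ and $g$ has full column rank, $g^\dagger g=I$ gives $\hat u=g^\dagger r$ equal to \eqref{FOC2}. For P3 I verify the two hypotheses of Proposition~\ref{pro1}. For H2 I factor the $(1,2)$ entry as $\calj_{(1,2)}=c(x)/\nabla_{x_0}H_0|_{x_0=\Phi}$; since $\nabla_{x_0}H_0=x_0$ this gives $c(x)=-x_\ell|x_p|$, which on $\cala$ equals $-\omega_\star\beta_\star$, finite and nonzero because $\omega_\star\neq0$ and $\beta_\star>0$. For H1 I compute the dissipation $\nabla^\top H\,\calr\,\nabla H=R\Phi^2+(k/\beta_\star)|x_p|(x_\ell-\omega_\star)^2$; being a sum of nonnegative terms (with $|x_p|>0$ near $\cala$), it vanishes precisely on $\{\Phi=0,\ x_\ell=\omega_\star\}=\cala$, which is invariant, so $\cala$ is the largest invariant set in $\calq$. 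Proposition~\ref{pro1} then yields asymptotic orbital stability.

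The exponential rate drops out of the scalar dynamics of $\Phi$ and $\tilde\omega:=x_\ell-\omega_\star$. Differentiating $|x_p|$ along the closed loop the skew terms cancel, leaving $x_p^\top\dot x_p=-R\Phi|x_p|$ and hence $\dot\Phi=-R\Phi$, while the third row gives $\dot{\tilde\omega}=-(k/\beta_\star)|x_p|\tilde\omega$. In $B_\varepsilon(\cala)$ with $\varepsilon<\beta_\star$ we have $|x_p|\ge\beta_\star-\varepsilon>0$, so $\tilde\omega$ decays at rate at least $(k/\beta_\star)(\beta_\star-\varepsilon)$ and $\Phi$ at rate $R$; since $\|x\|_\cala=\sqrt{\Phi^2+\tilde\omega^2}$, convergence to $\cala$ is exponential. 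The one delicate point to keep in view throughout is that $\calj_{(1,2)}$ is singular exactly on $\cala$---the ``infinite interconnection'' of Remark~\ref{rem2}---so at each step one must track that this singularity is annihilated by the factor $\Phi=\nabla_{x_0}H_0$ inside $\nabla H$; it is precisely this finite, nonzero surviving product $c(x)=-\omega_\star\beta_\star$ that supplies a nonvanishing tangential velocity on $\cala$, guaranteeing a genuine orbit rather than a curve of equilibria.
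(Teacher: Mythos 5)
Your proof is correct and follows essentially the same route as the paper: direct verification of Assumption \ref{ass1} and of \emph{H1}--\emph{H2} (with the same $c(x)=-x_\ell|x_p|$), followed by exponential decay of the transverse coordinates $(\Phi, x_\ell-\omega_\star)$, for which your exact computation $\dot\Phi=-R\Phi$ and the identity $\|x\|_{\cala}^2=\Phi^2+(x_\ell-\omega_\star)^2$ even sharpen the paper's appeal to the converse-Lyapunov result of Hauser--Chung. The only substantive difference is that you verify \emph{H1} locally in $B_\varepsilon(\cala)$ (which suffices for the stated claim), whereas the paper identifies the global invariant set $\cala\cup\{\mathbf 0\}$ and shows the origin is unstable to obtain the stronger almost-global conclusion.
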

\begin{proof}\rm
The fact that Assumption \ref{ass1} is satisfied with \eqref{solution_ex1}-\eqref{FOC2} is easily verified. Assumption {\em H2} is also satisfied, with
$
c(x) = - x_\ell |x_p|,
$
which evaluated in $\cala$ yields $- \beta_\star \omega_\star \neq 0$. The largest invariant set in
 $
\{x\in \rea^3 | \nabla^\top H(x) \calr (x) \nabla H(x) =0 \}
 $
is $\cala\cup \{\mathbf{0}\}$. Some basic Lyapunov analysis shows that the origin is an unstable equilibrium. According to Proposition \ref{pro1}, we conclude that the closed-loop system is almost globally asymptotically orbitally stable.

To establish the exponential orbital stability claim we refer to \cite{HAUCHU}, where it is shown to be equivalent to prove that the transverse coordinate
$
z:= \col(\Phi(x_p), x_\ell-\omega_\star)
$
 exponentially converges to $(0,0)$. The proof of Proposition \ref{pro1} shows that we can always find some invariant compact sets containing $\cala$. In these compact sets, $|x_p| \ge c_1$ for some $c_1>0$. Thus in the neighborhood of $\cala$, we have
$$
 \dot{z}_2 = -{k \over \beta_\star} |x_p| z_2,
$$
then yielding the exponential convergence of $z_2$ to zero. Now we have
$$
\dot{z}_1 = - R |\nabla \Phi|^2 z_1 + \et,
$$
where $\et$ is an exponentially decaying term caused by $z_2(0)$. The Jordan curve $\Phi(x)=0$ implies $\nabla \Phi \neq 0$ in the neighborhood of $\cala$, from which we conclude the exponential stability of the transverse coordinate $z$.
\qed
\end{proof}

\begrem
It can also be shown that the closed-loop system takes the pH form \eqref{pH} with
\begin{equation}\label{solution_pde}
  \begin{aligned}
    \bfr(x)  &= \begin{bmatrix}
             R(|x_p| - \beta_\star) & 0 & 0 \\ 0 & R(|x_p|- \beta_\star) & 0 \\ 0 & 0 & {k \over \beta_\star} |x_p|
           \end{bmatrix}
    \\
  \bfj(x) & =
  \begmat{        0 & - { {\omega } }|x_p| & {k R \over \beta_\star} {x_2\over |x_p|} \\
                      * & 0 & - {k R \over \beta_\star} {x_1\over |x_p|} \\
                      *& *& 0}
  \\
  \bfh(x) & = {1\over2} |x_p|^2 + {1\over 2}(\omega-\omega_\star)^2, \quad
  \bfh_p^\star  = {1\over 2}\beta_\star^2.
  \end{aligned}
\end{equation}
and satisfies all the assumptions of Proposition \ref{pro3}. Hence, the IM can be orbitally stabilized with the EPD-PBC also.
\endrem

\subsection{FOC of the IM is an MSEA-PBC}
In this subsection we prove that the MSEA controller of Proposition \ref{pro4} exactly coincides with the industry standard direct FOC first proposed in \cite{BLA}---see also \cite[Chapter 2.2]{MARbook} and \cite[Chapter 11.2.1]{ORTetalbook}.

\begin{corollary}\rm
\lab{cor1}
The MSEA controller \eqref{FOC2} of Proposition \ref{pro4} yields, after a state and input change of coordinates, the classical direct FOC.
\end{corollary}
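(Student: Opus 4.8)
The plan is to write the rotor flux in polar form and show that the MSEA law \eqref{FOC2}, once rotated into the synchronously-rotating flux frame, collapses to the textbook $d$--$q$ current references. Since $x_p=\psi_r\in\rea^2$ and $\bbj$ is the infinitesimal generator of planar rotations, I would first set $\rho:=\angle x_p$ (the rotor-flux angle) and write $x_p = |x_p|\,e^{\bbj\rho}e_1$ with $e_1:=\col(1,0)$. This angle is well-defined on a neighbourhood of $\cala$ because, as established in the proof of Proposition \ref{pro4}, the invariant compact sets containing $\cala$ satisfy $|x_p|\ge c_1>0$. The two coordinate changes are then the state rotation $\bar x_p := e^{-\bbj\rho}x_p = |x_p|\,e_1$, which aligns the flux with the $d$-axis, and the input rotation $\bar u := e^{-\bbj\rho}u$, which is precisely the stator current expressed in field coordinates.

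The heart of the argument is a one-line calculation exploiting the commutation $e^{\bbj\rho}\bbj=\bbj e^{\bbj\rho}$ (both are analytic functions of $\bbj$, indeed $e^{\bbj\rho}=\cos\rho\,I_2+\sin\rho\,\bbj$). Substituting $x_p/|x_p| = e^{\bbj\rho}e_1$ into \eqref{FOC2} and pulling the rotation through the constant matrix gives
\[
u = e^{\bbj\rho}\Big[\beta_\star I_2 - \tfrac{k}{\beta_\star}(x_\ell-\omega_\star)\bbj\Big]e_1,
\]
so that, using $\bbj e_1=\col(0,1)$, the field-oriented input is
\[
\bar u = \Big[\beta_\star I_2 - \tfrac{k}{\beta_\star}(x_\ell-\omega_\star)\bbj\Big]e_1 = \col\!\Big(\beta_\star,\; -\tfrac{k}{\beta_\star}(x_\ell-\omega_\star)\Big).
\]
I would then identify the two entries as the classical direct-FOC references: a constant $d$-axis current $\bar u_d=\beta_\star$ fixing the rotor-flux magnitude, and a $q$-axis (torque-producing) current $\bar u_q$ that is a proportional speed-error feedback, which is exactly the structure of \cite{BLA}.

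To make the coincidence complete rather than a mere renaming, I would finish by rotating the plant \eqref{im1} into the same frame and checking that the flux-magnitude, flux-angle and speed equations reproduce the standard field-oriented model---in particular that the induced $\dot\rho$ equation matches the usual synchronous-speed/slip relation once $\bar u$ is substituted. I expect the only delicate points to be bookkeeping rather than conceptual: tracking the extra $\dot\rho$ terms produced when differentiating $e^{-\bbj\rho}$ in the transformed dynamics, and confirming that $(x_p,x_\ell)\mapsto(|x_p|,\rho,x_\ell)$ is a genuine diffeomorphism on the relevant invariant set (guaranteed by $|x_p|$ bounded away from zero). No real obstruction arises, since the whole result rests on the elementary observation that MSEA assigns the control along the direction $x_p/|x_p|$, which is precisely the rotating $d$-axis of FOC.
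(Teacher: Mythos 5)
Your proposal is correct and follows essentially the same route as the paper: rotate into the flux-oriented frame and read off the constant $d$-axis reference $\beta_\star$ and the speed-error-proportional $q$-axis reference. The only (immaterial) difference is that you perform a single rotation by the total flux angle, whereas the paper factors it as the rotor-frame rotation \eqref{coorot} followed by a polar decomposition of $\lambda$, so as to display the intermediate rotating-frame model \eqref{im2} in which the classical direct FOC \eqref{foc} is conventionally written; the two computations coincide because the rotations compose.
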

\begin{proof}\rm
To prove that \eqref{FOC2} coincides---modulo a coordinate change---with the direct FOC we introduce the change of coordinates
\begequ
\lab{coorot}
\psi_r := e^{\bbj\theta} \lambda, \quad  u:= e^{\bbj\theta} v, \quad   \dot{\theta} = \omega,
\endequ
that, applied to \eqref{im1} (with $\tau_L=0$), yields the well-known current-fed IM dynamics in the rotating frame
\begin{equation}\label{im2}
  \begin{aligned}
    \dot{\lambda} = -R\lambda + Rv , \quad
    \dot{\omega}  = {v^\top \bbj \lambda}.
  \end{aligned}
\end{equation}
Now, we write \eqref{im2} in polar coordinates $(\beta,\rho)$ as
\begalis{
\dot \beta &= -R \beta + Ri_d,\;\dot \rho = {R \over \beta}i_q,\;\dot \omega =\beta i_q
}
where we have defined
\begalis{
\lambda := \beta \begmat{\cos\rho \\ \sin\rho },\;\begmat{i_d \\ i_q}:= e^{-\bbj \rho} v.
}
It is easy to see from the equations above that the objective $\beta(t)\to \beta_\star,\;\omega(t)\to \omega_\star$, which is equivalent to the asymptotic stabilization of the set\footnote{Notice that $|\psi_r|=|\lambda|=\beta$.} $\cala$, is achieved with the simple control
\begin{equation}
\label{foc}
v=  e^{\bbj \rho}
\begin{bmatrix}
\beta_\star   \\ {k \over \beta_\star  } \big(\omega_\star   - \omega \big)
\end{bmatrix},
\end{equation}
with $k>0$. This is the famous direct FOC for induction motors. It is a simple exercise to show that \eqref{FOC2} is obtained applying to \eqref{foc} the change of coordinates \eqref{coorot}.
\qed
\end{proof}

\begrem
It is interesting to note that, expressed in the rotating coordinates, the direct FOC does not generate a periodic orbit, but only ensures set stability.
\endrem

\begrem
The application of the main idea of FOC of IMs for smooth regulation of Brockett's non-holonomic integrator was first reported in \cite{ESCetalaut}, and later adopted in \cite{DIXetal,MORSAM} for control of nonholonomic systems.
\endrem

%
\section{Pendulum Example}
\label{sec6}
%

\subsection{Local design}
We consider a benchmark in nonlinear control---the planar inverted pendulum, which is related to various applications, {\em e.g.}, the attitude control of space boosters and walking robots. The normalized model given by \cite{ASTetalauto}
\begequ
\label{dyn_pendulum}
\begin{aligned}
\dot{\theta}  = \omega , \quad
\dot{\omega}  = \sin \theta - u \cos \theta,
\end{aligned}
\endequ
where $\theta \in \mathbb{S}$ and $\omega \in\rea$ denote the angular position and velocity, and the input $u$ is the acceleration of the pivot. In this representation, the angles $0$ and $\pi$ correspond to the upright and downright positions, respectively.

We are interested in asymptotically stabilizing the pendulum oscillating around its upright equilibrium. We define $x= x_p:= \col(\theta,\omega)$ in the absence of the $x_\ell$ partition. The first design is a local result as follows.

\begin{proposition}
\label{prop:pendulum1}\rm
Consider the model \eqref{dyn_pendulum} in closed-loop with the control law
\begequ
\label{pd_pendulum2}
    u = {2\sin \theta} +  \omega P(\theta,\omega)\cos\theta
\endequ
with
$
{1\over \gamma }P(\theta,\omega) =  - ( \cos\theta - {1\over 2})^2 +  {1\over2} \omega^2 - H_p^*,
$
where $ H_{p}^*  := -( \cos\theta_* - {1\over 2})^2$, $\gamma >0$, and $\theta_* \in (-{\pi\over 3}, {\pi\over3})$, the system is \emph{locally} asymptotically orbitally stable. Furthermore, the angle $\theta$ ultimately oscillates between $[-\theta_*,\theta_*]$.
\end{proposition}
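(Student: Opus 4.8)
The plan is to recognize the closed loop as an instance of EPD-PBC (Proposition \ref{pro3}) in the degenerate case with no $x_\ell$-partition ($n=2$), and then read off all the claims from that structure. First I would identify the energy function: writing $H_p(\theta,\omega):=\hal\omega^2-(\cos\theta-\hal)^2$, one sees that the gain satisfies $\tfrac{1}{\gamma}P=H_p-H_p^*=\Phi$, with $\Phi$ as in \eqref{phi} and $H_p^*$ the prescribed level. Substituting \eqref{pd_pendulum2} into \eqref{dyn_pendulum} and using the identity $\sin\theta-2\sin\theta\cos\theta=-2\sin\theta(\cos\theta-\hal)=-\nabla_\theta H_p$, the closed loop becomes the port-Hamiltonian form \eqref{pH}, namely $\dot{x}_p=[\calj_{pp}-\calr_{pp}(x_p)]\nabla H_p$ with $\calj_{pp}=\begmat{0&1\\-1&0}$ and $\calr_{pp}(x_p)=\diag\big(0,\,\gamma\Phi(x_p)\cos^2\theta\big)$. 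This matching is the key computation; everything else is verification.

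Second, I would check the hypotheses of Proposition \ref{pro3} on a neighborhood of $x_p^*=(0,0)$. Since there is no $x_\ell$ partition, condition \emph{H3} and the $\calj_{p\ell}$/$\calr_{\ell\ell}$ parts are vacuous, and \emph{H4} reduces to $\calj_{(1,2)}=1\neq0$ (so also $\det\calj_{pp}\neq0$). A short computation gives $\nabla^2H_p(0,0)=I>0$, so by continuity the convexity part of \emph{H5} holds on some ball $B_{\varepsilon_*}(x_p^*)$; the pumping-and-damping conditions \eqref{pd-prop}--\eqref{pd-prop2} follow from $\calr_{pp}\Phi=\diag(0,\gamma\Phi^2\cos^2\theta)\ge0$ together with $\cos\theta\ge\hal>0$ on the region $|\theta|<\pi/3$. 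For $\theta_*\in(-\pi/3,\pi/3)$ one has $H_p^*=-(\cos\theta_*-\hal)^2\in(-\hal^2,0)$ while $H_p(0,0)=-\hal^2=\min H_p$ locally, so the remaining max-condition in \emph{H5} is met once $\varepsilon_*$ is taken large enough to expose points with $H_p>H_p^*$.

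Third, I would invoke Proposition \ref{pro3} to conclude attractivity of $\cala\cap B_{\varepsilon_*}(x_p^*)$. Concretely, with $V=\hal\Phi^2$ one gets $\dot V=-\gamma\Phi^2\cos^2\theta\,\omega^2\le0$, and LaSalle confines trajectories to the largest invariant set of $\{\Phi\cos^2\theta\,\omega^2=0\}$, which---after excluding the analogue of Case iii) exactly as in Proposition \ref{pro3}---reduces to $\cala\cup\{x_*\}$ with $x_*=(0,0)$. The function $W=\Phi$ satisfies $\dot W=-\gamma\Phi\cos^2\theta\,\omega^2\ge0$ near $x_*$, where $\Phi<0$, so $x_*$ is unstable (Chetaev) and $\cala$ is attractive. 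That $\cala$ is a genuine orbit follows from the residual dynamics $\dot x_p=\calj_{pp}\nabla H_p$ on $\cala$: at the turning points $(\pm\theta_*,0)$ one has $\nabla_\theta H_p=2\sin\theta_*(\cos\theta_*-\hal)\neq0$, so $f_{\tt cl}\neq0$ on $\cala$, verifying \eqref{fneq0}.

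Finally, the oscillation range is geometric: on $\cala$ we have $\dot H_p=0$, hence $H_p\equiv H_p^*$, and setting $\omega=0$ in $H_p=H_p^*$ gives $(\cos\theta-\hal)^2=(\cos\theta_*-\hal)^2$, whose only solutions inside $|\theta|<\pi/3$ are $\theta=\pm\theta_*$; these are the extreme angles, so $\theta$ oscillates in $[-\theta_*,\theta_*]$. I expect the main obstacle to be the local, nonconvex nature of $H_p$: the potential $-(\cos\theta-\hal)^2$ has competing critical points at $\theta=\pm\pi/3$ (maxima) and $\theta=\pi$ (global minimum) and is convex only on a sub-interval strictly inside $(-\pi/3,\pi/3)$, so care is needed to choose $\varepsilon_*$ (and implicitly restrict $\theta_*$) so that $B_{\varepsilon_*}(x_p^*)$ lies in the convex region and $\{\Phi=0\}$ is there a single Jordan curve encircling the origin; the permanently singular damping $\calr_{pp}=\diag(0,\cdot)$ also forces one to \emph{confirm}, rather than assume, the Case iii) exclusion.
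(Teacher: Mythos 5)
Your proposal is correct and follows essentially the same route as the paper: it identifies the same closed-loop pH structure ($\calj_{pp}=\left[\begin{smallmatrix}0&1\\-1&0\end{smallmatrix}\right]$, $\calr_{pp}=\diag(0,\gamma\Phi\cos^2\theta)$, $H_p=\hal\omega^2-(\cos\theta-\hal)^2$) and then verifies the hypotheses of Proposition \ref{pro3} on the invariant neighborhood where $|\theta|<\pi/3$. You actually supply more detail than the paper (which dismisses \emph{H3}--\emph{H5} as "easy to verify"), and your closing caveats about the nonconvexity of $H_p$, the disconnected level set $\Phi=0$, and the permanently singular damping are exactly the points the paper handles via the invariance of the set $\Omega$.
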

\begin{proof}\rm
\label{proof:6}
We first define
$$
\Phi(x)= H_p - H_p^*.
$$
The closed loop takes the pH form \eqref{pH} with\footnote{The Hamiltonian function is motivated by \cite{ASTetalauto}.}
$$
\begin{aligned}
\calj(x) & =  \begmat{0 & 1 \\ -1 & 0} , \;
\calr(x)  = \begmat{0 & 0 \\ 0 & \gamma(\cos\theta)^2 \Phi(x) } \\
H_p(x) & = - \Big( \cos\theta - {1\over 2}\Big)^2 +  {1\over2} \omega^2 .
\end{aligned}
$$
The Hamiltonian function $H_p(x)$ admits an isolated \emph{local} minimal point at $(0,0)$. Define
$
\cala =\{x\in \rea^2| \Phi(x) =0 \},
$
and the periodic orbit is
$$
{\cala \cap \{x\in\rea^2| |x_1| < {\pi\over 3}\}},
$$
which is a Jordan curve.

The fact
$$
\dot{\Phi} = - \omega^2 (\cos\theta)^2 \Phi.
$$
implies that the set
$$
\Omega=\{x\in\rea^2 | |\Phi(x)|< \varepsilon, \; |x_1| < {\pi \over 3} \}
$$
is invariant for some $\varepsilon>0$. It is easy to verify the assumptions {\em H3}-{\em H5} in the set $\Omega$. We complete the proof by applying Proposition \ref{pro3}.
\qed
\end{proof}

We underscore that the level set $\Phi(x)=0$ containing two disconnected parts. Hence, the restriction $|x_1|< {1\over3}\pi$ is indispensable. {We give the simulation results in Fig. \ref{fig:example3.3} with initial values $(0.1\pi,0)$ and $(0.3\pi,0)$, where $\gamma=5$ and $H_{p*} =  -0.0429$. In this figure we show the evaluation of the (2,2)-element of $\calr(x)$, illustrating the pumping-and-damping mechanism.}

\begin{figure}[h]
  \centering
  \includegraphics[width=8.5cm]{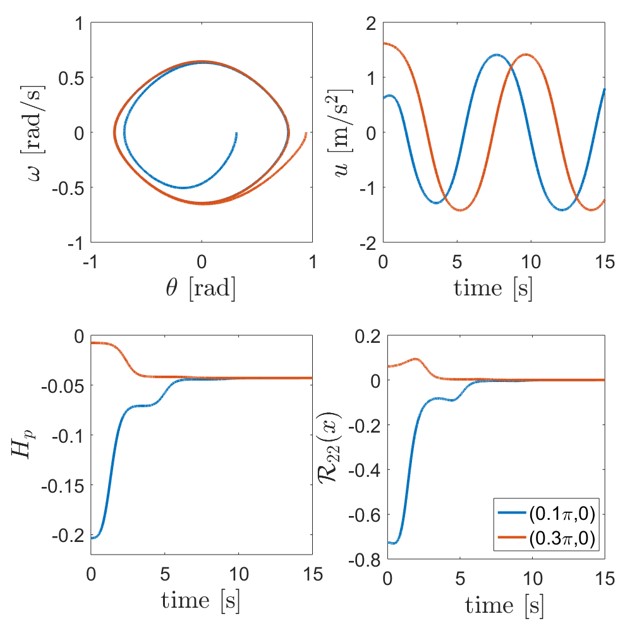}
  \caption{The dynamics behaviour in Proposition \ref{prop:pendulum1}.}\label{fig:example3.3}
\end{figure}

\subsection{Almost global design}

The following proposition is an almost global design.

\begin{proposition}
\label{prop:pendulum2}\rm
Considering Proposition \ref{prop:pendulum1}, if we select
$$
P(\theta,\omega) = ({3\over 2} \cos \theta  + {1\over2} \omega^2 -{3\over4} )  Q(\theta,\omega)
$$
and
$$
\begin{aligned}
 Q(\theta,\omega) & := \left\{
    \begin{aligned}
    & \gamma_1 (H_p(\theta,\omega) - H_{p}^*), & \theta \in (-{\pi\over3}, {\pi\over3})    \\
    & \gamma_2, & \theta \in [-\pi, {\pi\over3}] \cup [{\pi\over3}, \pi)
    \end{aligned}
 \right. \\
 \end{aligned}
$$
then there exist $\gamma_1, \gamma_2 \in \rea_+$ such that the state asymptotically converges to the orbit ${\cala \cap \{x\in\rea^2| |x_1| < {\pi\over 3}\}}$ or the saddles $({\pi\over3}, 0)$, almost globally on $\mathbb{S}\times \mathbb{R}$.
\end{proposition}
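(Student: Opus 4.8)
The plan is to exploit the (piecewise) port-Hamiltonian structure already isolated in Proposition \ref{prop:pendulum1} and to treat the inner region $\{|\theta|<\pi/3\}$ and the outer region $\{|\theta|\ge\pi/3\}$ separately, reconnecting them across the switching lines $\theta=\pm\pi/3$. First I would check that, with the new $P$, the closed loop still reads $\dot x=[\calj-\calr]\nabla H_p$ with $\calj=\begmat{0 & 1 \\ -1 & 0}$, $\calr=\diag(0,\calr_{22})$ and $\calr_{22}=P\cos^2\theta=g(\theta,\omega)\,Q(\theta,\omega)\cos^2\theta$, where $g(\theta,\omega):=\tfrac{3}{2}\cos\theta+\tfrac{1}{2}\omega^2-\tfrac{3}{4}$, so that $\dot H_p=-\calr_{22}\omega^2$. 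The key preliminary observation is that the locus $g=0$ is exactly the level set $\tfrac12\omega^2+\tfrac32\cos\theta=\tfrac34$ through both saddles $(\pm\pi/3,0)$: $g<0$ in the low-energy zone around the downright $(\pi,0)$ (where $g=-9/4$) and $g>0$ for large $|\omega|$. Since $Q>0$, this gives $\calr_{22}>0$ off a bounded set, from which I would conclude that $H_p$ is eventually non-increasing there, hence all trajectories are bounded and precompact on the cylinder $\mathbb{S}\times\rea$.

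In the inner region $Q=\gamma_1(H_p-H_p^*)=\gamma_1\Phi$, and since $\cos\theta>\tfrac12$ there we have $g>0$; therefore $\calr_{22}=\gamma_1 g\cos^2\theta\,\Phi$ carries the sign of $\Phi$, i.e. the pumping-and-damping conditions of Assumption \ref{ass2} hold, with $\calr_{22}=0\Leftrightarrow\Phi=0$. Consequently the local argument of Proposition \ref{prop:pendulum1} (itself an application of Proposition \ref{pro3}) goes through verbatim inside the invariant tube $\Omega=\{|\Phi|<\varepsilon,\ |\theta|<\pi/3\}$ (invariance survives because the extra factor $g$ only rescales $\calr_{22}$ positively), giving $\dot V=-\nabla^\top H_p[\Phi\calr_{22}]\nabla H_p\le0$ with $V=\tfrac12\Phi^2$ and hence convergence to $\cala\cap\{|\theta|<\pi/3\}$ for every trajectory that enters and remains in $\Omega$. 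The real content of the proposition is therefore to show that almost every trajectory is eventually trapped in the upright well $\{|\theta|<\pi/3,\ H_p<0\}$.

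In the outer region $Q=\gamma_2>0$, so $\dot H_p=-\gamma_2 g\cos^2\theta\,\omega^2$ has the sign of $-g$: energy is pumped wherever the state lies below the saddle energy ($g<0$, in particular throughout a neighbourhood of the downright) and dissipated above it ($g>0$). I would first combine $\dot H_p\ge0$ near $(\pi,0)$ with the fact that $H_p$ has a strict minimum there, through a LaSalle/instability argument, to show the downright is unstable, so no positive-measure set is trapped in the downright well. Then, for a suitable choice of $\gamma_1,\gamma_2$, I would argue that this sign pattern funnels the outer flow toward the saddle-energy level and across $\theta=\pm\pi/3$ into the inner well, so that every trajectory except those on the stable manifolds of $(\pm\pi/3,0)$ eventually enters $\Omega$ and is captured by the inner analysis. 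Since $\dot\theta=\omega\neq0$ at any crossing of $\theta=\pm\pi/3$ with $\omega\neq0$, the crossings are transversal and the Carath\'eodory solutions of the discontinuous closed loop are well defined, legitimising the hand-off between regions.

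The main obstacle is precisely this global capture argument, since $\dot H_p$ is sign-indefinite and $H_p$ is \emph{not} a global Lyapunov function: one must follow the flow across the switching surface rather than invoke a single invariance principle. I expect three delicate points: (i) excluding rotational ($\omega$ of fixed sign) $\omega$-limit behaviour, to be handled by the high-$|\omega|$ damping that drives $H_p$, hence $\omega$, down until the trajectory can no longer clear a saddle; (ii) showing that the two saddle stable manifolds are one-dimensional, hence of measure zero, which yields the stated dichotomy between convergence to $\cala\cap\{|\theta|<\pi/3\}$ and convergence to $(\pm\pi/3,0)$; and (iii) fixing $\gamma_1,\gamma_2$ so that the energy pumped in the outer region suffices to reach the saddle level without creating a spurious attractor. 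Everything else reduces to the already-established Proposition \ref{prop:pendulum1}.
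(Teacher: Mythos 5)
Your proposal follows essentially the same route as the paper's proof: a partition of the cylinder by the sign of the pumping-and-damping coefficient (your curves $g=0$, $\Phi=0$ and the switching lines $\theta=\pm\pi/3$ are a mild variant of the paper's nine-region partition by $P=0$ and $H_p=0$), the inner-well LaSalle argument on $V=\tfrac12\Phi^2$ inherited from Proposition \ref{prop:pendulum1}, the repeller/instability argument for the low-energy region around the downright, and a capture argument pushing the intermediate band across the saddle energy level. The one step you explicitly leave open---ruling out trajectories that remain forever in the intermediate region (your point (i) on rotational $\omega$-limit behaviour, the paper's case 3b)---is precisely the step the paper itself does not carry out in detail, disposing of it by appeal to the swing-up analysis of \cite{ASTetalauto} ``with some complicated analysis,'' so your proposal is faithful to the paper's argument, including its weakest link.
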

\begin{proof}
\rm
The closed-loop is the same with the one in Proposition \ref{prop:pendulum1}, but with a different function $P(\theta,\omega)$. We draw the curves of $P(\theta,\omega)=0$ (the red one) and $H_p(\theta,\omega)=0$ (the green one) in Fig. \ref{fig:example3.3}. They divide the manifold $\mathbb{S}\times \mathbb{R}$ into 9 portions, where each set is defined as an \emph{open} set. Intuitively, in the set
$$
\Omega_p := \cl(S)\cup D_1 \cup \cl(U_1) \cup \cl(U_2) \cup \cl(U_3) \cup \cl(U_4)
$$
the pumping-and-damping matrix is a pumping matrix, and in the set
$$
\Omega_d := Q_1 \cup Q_2 \cup D_2
$$
it acts as a damping one.

\begin{figure}[h]
  \centering
  \includegraphics[width=6cm]{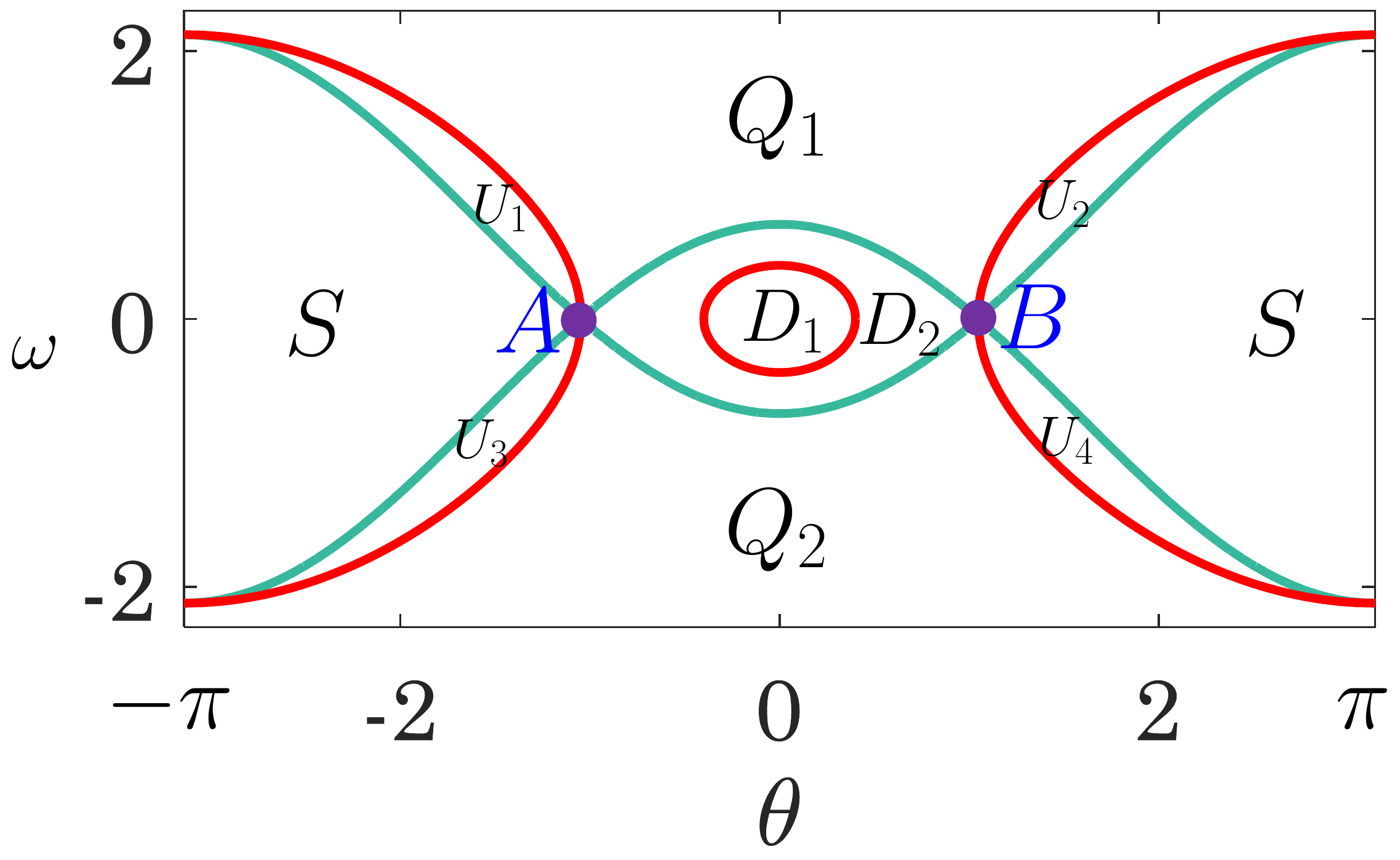}
  \caption{The partition of the space $\mathbb{S}\times \mathbb{R}$.}\label{fig:example3.3}
\end{figure}

It is clear that
$$
\dot{H}_p = (\nabla H_p)^\top
\begin{bmatrix} 0 & 0 \\ 0 & -(\cos \theta)^2 P(\theta,\omega)\end{bmatrix} \nabla H_p.
$$
We now consider three possible cases of the initial condition.

1) In the set
$$
\Omega_p - D_1 /\{(\pi,0), A,B\}
$$
we have
$
\dot{H}_p >0,
$
thus this \emph{connected} set is a repeller. Noticing that the boundary of the set $S$ is the contour $H_p(x)=0$, thus all states in the set $\Omega_p - D_1 /\{(\pi,0), A,B\}$ will leave into $\cl(U_1) \cup \cl(U_2) \cup \cl(U_3) \cup \cl(U_4)$, which contains two equilibria $A$ and $B$.

2) It is also easy to show the set
$$
\cl(D_2 )\cup \cl(D_1)/\{A,B\}
$$
is invariant. In this set, we need to prove that the control law regulates the state to the desired energy level $H_p= H_{p}^*$\footnote{It should be noticed that the contour $\Phi(x)=0$ has two unconnected parts.}. To this end, we define the function
$$
V = {1\over 2} \Phi(x)^2.
$$
Its derivative along the closed-loop dynamics is
$$
\begin{aligned}
    \dot{V} 
                 & = -(\nabla H_p)^\top  \begin{bmatrix} 0 & \\ & \calz(\theta,\omega) \end{bmatrix}\nabla H_p
\end{aligned},
$$
with 
$$
\calz(\theta,\omega) = \bigg({3\over 2} \cos \theta  + {1\over2} \omega^2 -{3\over4} \bigg) \cdot \big( \cos\theta \Phi\big)^2,
$$
which is positive semi-definite in $\cl(D_2 )\cup \cl(D_1)/\{A,B\}$. We have the set $\{(\theta,\omega)| \dot{V}=0\}$ equal to $\{(\theta,\omega)| H_p(\theta,\omega)=H_{p\star}\}\cup (0,0)$, and the equilibrium $(0,0)$ is unstable. Thus invoking LaSalle's invariance principle, if the initial state is in $\cl(D_2 )\cup \cl(D_1)/\{A,B, (0,0)\}$, it will asymptotically converge to the desired periodic orbit.

3) For the set $ \cl(U_1) \cup \cl(U_2) \cup Q_1 $ or the set $\cl(U_3) \cup \cl(U_4)  \cup Q_2$, it is relatively complicated. For convenience, we define the set
$$
\Omega_3 := Q_1 \cup Q_2 \cup \cl(U_1 \cup U_2 \cup U_3 \cup U_4).
$$
Since the set $\cl(S)$ is a repeller, the states which star in $\Omega_3$ have two possible trajectories:
\begin{description}
  \item[3a)] converging to the compact set $\cl(D_2)$, and then it can be analyzed as case 2).
  \item[3b)] staying in the set $\Omega_3$ for all $t>0$.
\end{description}
Following the proof in \cite{ASTetalauto} with some complicated analysis, we can prove the energy dissipation in $\Omega_3$ thus ruling out the case 3b) except two equilibria $A$ and $B$.
It completes the proof.
\qed
\end{proof}

Fig. \ref{fig:example3.4} gives the simulation results to illustrate Proposition \ref{prop:pendulum2} with the $x(0)= (\pi, 0.01)$, $\gamma_1=20,\; \gamma_2=2, \theta_* = {\pi \over 4}$ and {$H_{p*}=-0.0429$}. The figure illustrates the almost global property with the angle ultimately oscillating between $[-{\pi\over4}, {\pi\over4}]$.

\begin{figure}[h]
  \centering
  \includegraphics[width=8.5cm]{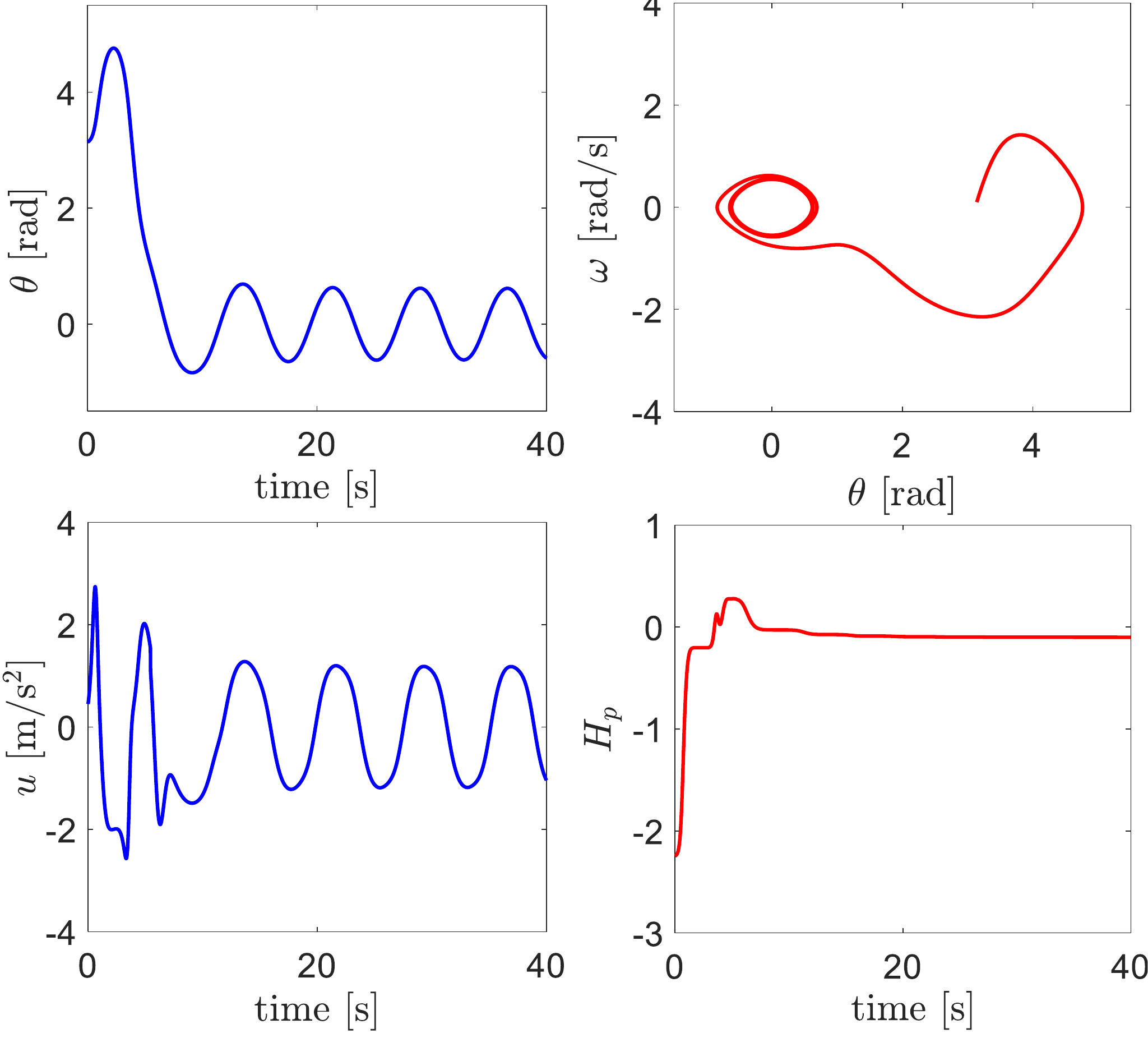}
  \caption{The dynamics behaviour nin Proposition \ref{prop:pendulum2}.}\label{fig:example3.4}
\end{figure}

\section{Concluding Remarks}
\label{sec7}
It has been shown that the IDA-PBC design methodology can be adapted to address the problem of orbital stabilization of nonlinear systems. We propose two different, but related, IDA-PBC designs: MSEA and EPD---whose application, as usual in IDA, requires the solution of a PDE. In the former,  the closed-loop Hamiltonian function is shaped to have minima at the desired orbit. For the latter, we regulate the energy to a desired value using a pumping-and-damping dissipation matrix. To ensure asymptotic orbital stability, and not just set attractivity, some constraints are imposed on the interconnection matrix. We then establish connections between the above-mentioned methods.

Currently research is carried out in the following directions.
\begin{itemize}
  \item The problem of path following---in a time parameterization-free manner---for mechanical systems. It has been observed that this is closely related to the orbital stabilization problem studied in this paper.
  \item The connection between the proposed method and the {indirect} version of FOC is still an open, and interesting, topic.
  \item Application of the proposed methods to solve some periodic motion control problems in mechanical and power electronic systems, {\em e.g.} in walking robots and AC (or resonant) power converters.
\end{itemize}

\begin{ack}                               
This paper is supported by the NSF of China (61473183, U1509211, 61627810), National Key R\&D Program of China (SQ2017YFGH001005), China Scholarship Council and by the Government of the Russian Federation (074U01), the Ministry of Education and Science of Russian Federation  (GOSZADANIE 2.8878.2017/8.9, grant 08-08).
\end{ack}
%

\end{document}